\newcommand{\R}{\mathbb{R}}
\newcommand{\Z}{\mathbb{Z}}
\DeclareMathOperator{\cp}{cp}
\newcommand{\hb}{\mathbf{h}}
\newcommand{\xb}{\mathbf{x}}
\newcommand{\yb}{\mathbf{y}}
\newcommand{\zb}{\mathbf{z}}
\newcommand{\alphab}{\bm{\alpha}}
\newcommand{\betab}{\bm{\beta}}
\newcommand{\Hb}{\mathbf{H}}
\newcommand{\Ib}{\mathbf{I}}
\newcommand{\Mb}{\mathbf{M}}
\newcommand{\Fb}{\mathbf{F}}
\newcommand{\Pb}{\mathbf{P}}
\newcommand{\Ab}{\mathbf{A}}
\newcommand{\Bb}{\mathbf{B}}
\newcommand{\CC}{\mathrm{C}}
\newcommand*\mcap{\mathbin{\mathpalette\mcapinn\relax}}
\newcommand*\mcapinn[2]{\vcenter{\hbox{$\mathsurround=0pt
  \ifx\displaystyle#1\textstyle\else#1\fi\bigcap$}}}
\newcommand*\mcup{\mathbin{\mathpalette\mcupinn\relax}}
\newcommand*\mcupinn[2]{\vcenter{\hbox{$\mathsurround=0pt
  \ifx\displaystyle#1\textstyle\else#1\fi\bigcup$}}}
\newcommand\floor[1]{\lfloor#1\rfloor}
\newcommand\ceil[1]{\lceil#1\rceil}
\DeclareFontFamily{OT1}{pzc}{}
\DeclareFontShape{OT1}{pzc}{m}{it}{<-> s * [1.200] pzcmi7t}{}
\DeclareMathAlphabet{\mathpzc}{OT1}{pzc}{m}{it}
\newtheorem{proposition}{Proposition}
\newtheorem{theorem}{Theorem}
\newtheorem{lemma}{Lemma}
\newtheorem{corollary}{Corollary}
\newtheorem{definition}{Definition}
\newtheorem{assumption}{Assumption}
\newtheorem{remark}{Remark}
\title{\LARGE \bf Clique Gossiping}
\author{Yang Liu, Bo Li, Brian D. O. Anderson, and  Guodong Shi\thanks{Y. Liu, B. D. O. Anderson, and   G. Shi are with the Research School of Engineering, The Australian National University, ACT 0200,
Canberra, Australia. B. Li is with  Key Lab of Mathematics Mechanization, Chinese Academy of Sciences, Beijing 100190, China.   Email: yang.liu@anu.edu.au, libo@amss.ac.cn,  brian.anderson@anu.edu.au, guodong.shi@anu.edu.au.}
}
\date{}
\begin{document}

\maketitle

\begin{abstract}
This paper  proposes and investigates a framework for clique gossip protocols. As complete  subnetworks, the existence of cliques  is ubiquitous in various social, computer, and engineering networks. By clique gossiping, nodes interact with each other along a sequence of cliques.  Clique-gossip protocols are defined as arbitrary  linear node   interactions where node states are vectors evolving as linear dynamical  systems.  Such protocols become clique-gossip averaging algorithms when node states are scalars under averaging rules. We generalize the classical  notion of line graph  to capture the essential  node interaction  structure induced by both the underlying network and the specific clique sequence. We prove a fundamental eigenvalue invariance principle  for   periodic clique-gossip protocols, which  implies  that any permutation of the clique sequence leads to the same spectrum for the overall  state transition when the generalized line graph contains no cycle. We also prove that for a network with $n$ nodes, cliques with smaller sizes determined by factors of $n$ can always be constructed leading to finite-time convergent clique-gossip averaging  algorithms, provided $n$ is not a prime number. Particularly,  such finite-time convergence can be achieved with cliques of equal size $m$ if and only if $n$ is divisible by $m$ and they have exactly the same prime factors. A  proven fastest  finite-time convergent clique-gossip algorithm is constructed for clique-gossiping using size-$m$ cliques. Additionally, the  acceleration effects of clique-gossiping are illustrated via numerical examples.
\end{abstract}

\section{Introduction}

Gossip protocols provide a scalable and self-organized way of carrying out information dissemination over networks in the absence of centralized decision makers \cite{Eugster2003,Jelasity2005,Shah2008,ravelomanana2007optimal,hopkinson2009adaptive,iwanicki2010gossip}. In a gossip process, a  pair of nodes is selected randomly or deterministically at any given time, and then this pair of nodes {\em gossip} by exchanging information between each other as  a fundamental resource allocation protocol for computer networks \cite{Demers1987,Kempe2004}. Today, gossip processes are natural models for interpersonal interactions and opinion evolutions in social networks \cite{Doerr2012};  distributed systems running gossip protocols have been developed  to realize in-network control \cite{Bullo2012}, filtering \cite{Moura2010}, signal processing \cite{Rabbat2013}, and computation \cite{mou2017eigenvalue}.

Particularly, gossip averaging algorithms serve as a basic model for gossip protocols, where during one gossip interaction the two involved nodes average their states, which are simply real numbers \cite{Kempe2003,Boyd2006}. The  rate of convergence of  such gossip averaging algorithms can represent the performance of gossip protocols that are built on top of that, and quantify efficiency and robustness of  the underlying  network structure.
For random gossip algorithms, various results reveal that the network structure plays a critical role in shaping the convergence speed in the asymptotic sense \cite{Kempe2003,Boyd2006}. With deterministic gossiping, scheduling of the gossiping pairs becomes equally  influential  \cite{liu2011deterministic}; indeed, even finite-time convergence can be achieved with suitable number of nodes \cite{Guodong2016ton}. It is worth mentioning  that in certain cases transitions can be made  between deterministic and random gossip algorithms, where the Borel-Cantelli lemma provides immediate connections \cite{Shi-TIT}.

In this paper, we propose and investigate a framework involving clique gossip protocols, where simultaneous node interactions can take place among cliques instead of being restricted to pairs. Cliques are subnetworks that form a complete graph in their local topologies, whose existence is universal in social, computer, and engineering networks.  In fact, the use of cliques for beamforming and clustering has been employed in wireless sensor networks \cite{zeng2013clique,biswas-2013}. In a general model, each node holds a vector state at any given time, and clique-gossip protocols are arbitrary linear node dynamical interactions  along  a sequence of cliques that forms a coverage of the underlying network. When the node state vector is one-dimensional and the node interaction rules are  simply averaging, the clique-gossip protocol is reduced to a clique-gossip averaging algorithm. To facilitate the analysis of the network structure that governs the node interactions, we first generalize the classical notion of line graph in graph theory. Then our contributions are made through a few important convergence properties of  clique gossiping:
\begin{itemize}
\item We prove a fundamental eigenvalue invariance principle  for scheduling  periodic clique-gossip protocols,  valid for arbitrary clique-gossip protocols represented by linear dynamical systems. Such an invariance principle implies  that any permutation of the clique sequence leads to the same spectrum for the overall  state transition matrix if the generalized line graph associated with the clique-gossip algorithm contains no cycle.

\item We prove that for a network with $n$ nodes, there always exist ways of constructing cliques with smaller sizes leading to finite-time convergent clique-gossip averaging  algorithms, provided that $n$ is not a prime number. We also prove that such finite-time convergence can be achieved with cliques of equal size $m$ if and only if $n$ is divisible by $m$ and they have exactly the same prime factors.
\end{itemize}
It is worth mentioning that for clique gossiping with equal size $m$ cliques, we have constructed one of the fastest  finite-time convergent clique-gossip algorithms, which is shown to reach the fundamental complexity lower bound by elementary number theory. Additionally, we illustrate how multi-clique gossiping can be built over an existing clique-gossiping protocol, and  the acceleration effects of clique-gossiping are shown  via numerical examples.

The remainder of this paper is organized as follows. Section \ref{sec:model} presents the clique-gossip model. Section~\ref{sec:invariance} studies periodic clique-gossip protocols by establishing the eigenvalue invariance principle and investigating the rate of convergence. Section \ref{sec:finitetime} focuses on the possibilities of finite-time convergence for clique-gossip averaging algorithms. Finally Section \ref{sec:conclusions} concludes the paper with a few remarks on potential future directions.

\section{Problem Definition}\label{sec:model}
\subsection{Network Model}
Consider a group of nodes whose interaction structure is described by a simple undirected graph $\mathrm{G}=(\mathrm{V},\mathrm{E})$, where $\mathrm{V}=\{1,2,\dots,n\}$ is the node set and an element $(i,j)\in\mathrm{E}$ is an unordered pair of two distinct nodes $i,j\in\mathrm{V}$. Define the neighbor node set $\mathrm{N}_i$ of node $i$ by $\mathrm{N}_i=\{j:(i,j)\in\mathrm{E}\}$. Associated with a node subset $\mathrm{S}\subset\mathrm{V}$, its induced graph $\mathrm{G}[\mathrm{S}]$ is defined as the graph with node set $\mathrm{S}$ and the edge set containing all edges in $\mathrm{E}$ with both endpoints in $\mathrm{S}$. A clique $\mathrm{C}$ is a subset of $\mathrm{V}$ whose induced graph $\mathrm{G}[\CC]$ is a complete graph. Let $\mathsf{H}_{_{\mathrm{G}}}$ be the set containing all the cliques of $\mathrm{G}$. We index the elements in $\mathsf{H}_{_{\mathrm{G}}}$ as $\CC_1,\dots,\CC_D$. We say that two cliques $\CC_i,\CC_j\in\mathsf{H}_{_{\mathrm{G}}}$ with $i\neq j$ are adjacent if $\CC_i\mcap\CC_j\neq\emptyset$.

\begin{figure}[h]
\centering
\includegraphics[width=4.2in]{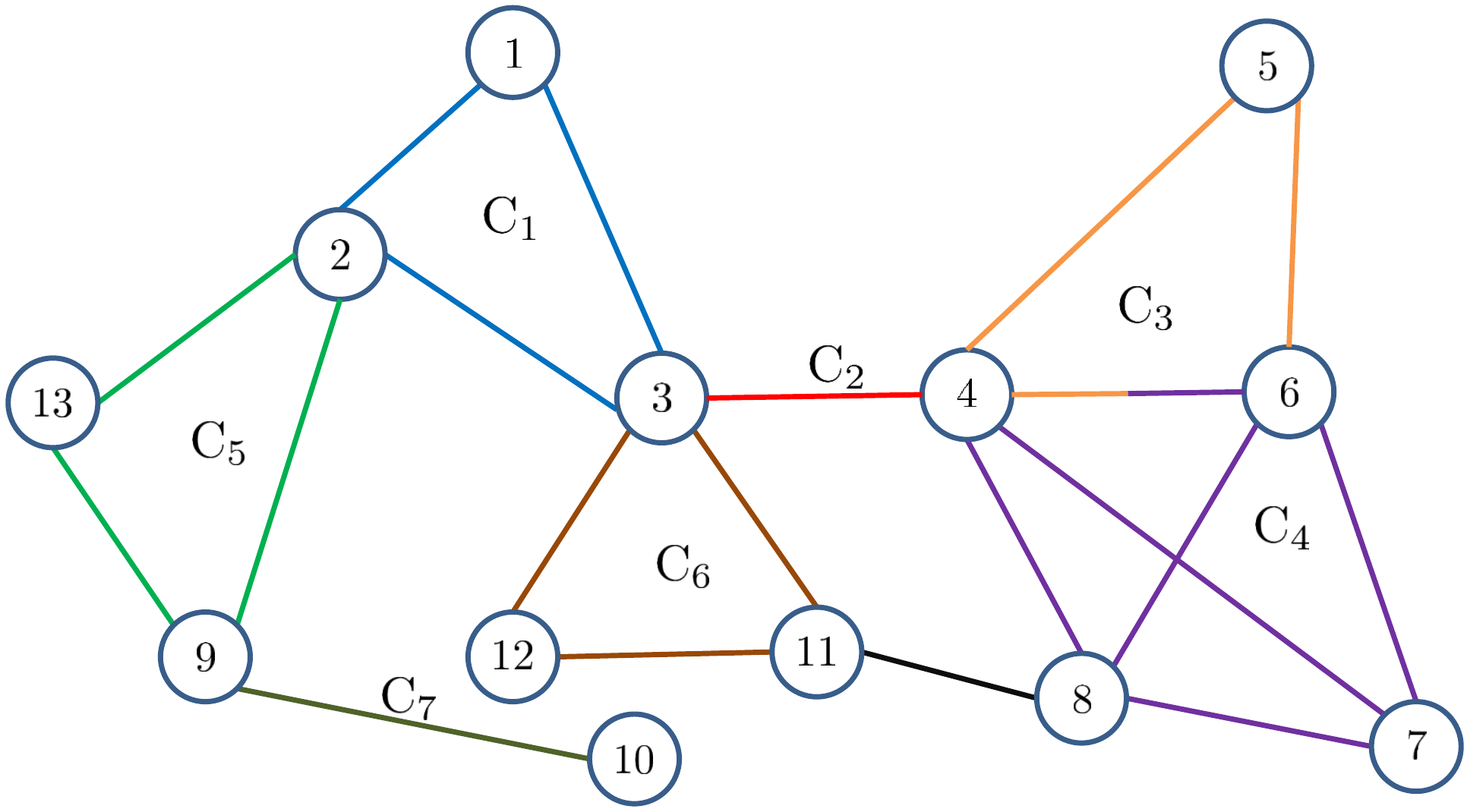}
\caption{A connected graph $\mathrm{G}$. Here $\CC_1=\{1,2,3\},\CC_2=\{3,4\},\CC_3=\{4,5,6,7\},\CC_4=\{4,6,7,8\},\CC_5=\{2,9,13\},\CC_6=\{3,11,12\},\CC_7=\{9,10\}$. Clearly $\{\CC_1,\CC_2,\dots,\CC_7\}$ is a clique coverage. Links within the same clique are marked with the same color and style.}
\label{fig:ori}
\end{figure}

\begin{definition}A subset $\mathsf{H}_{_{\mathrm{G}}}^\ast=\{\CC_{\mu_1},\dots,\CC_{\mu_d}\}\subset\mathsf{H}_{_{\mathrm{G}}}$ with $d\ge 1$ is called a clique coverage of $\mathrm{G}$ if $\bigcup_{\CC_i\in\mathsf{H}_{_{\mathrm{G}}}^\ast}\CC_i=\mathrm{V}$ and the union graph $\bigcup_{\CC_i\in\mathsf{H}_{_{\mathrm{G}}}^\ast}\mathrm{G}[\CC_i]$ is connected.
\end{definition}

Note that every connected graph has a clique coverage. Let $\mathsf{H}_{_{\mathrm{G}}}^\ast$ be the collection of pairs of the endpoints of all edges in $\mathrm{G}$. Then $\mathsf{H}_{_{\mathrm{G}}}^\ast$ is clearly a clique coverage.

\subsection{Clique-gossip Protocols}\label{sec:clique-gossip protocol}
Let $\mathsf{H}_{_{\mathrm{G}}}^\ast=\{\CC_{\mu_1},\dots,\CC_{\mu_d}\}$ be a clique coverage of $\mathrm{G}$ where $\mu_k\in\{1,\dots,D\}$ for $k=1,\dots,d$. Each node $i\in\mathrm{V}$ of $\mathrm{G}$ holds a vector $\xb_i(k)\in\R^b$ evolving at discretized time $t=0,1,2,\dots$. For each clique $\CC_{\mu_l}\in\mathsf{H}_{_{\mathrm{G}}}^\ast$, we assign a matrix $\mathbf{A}_{ij}(\mu_l)\in\R^{b\times b}$ to edge $(i,j)\in\mathrm{E}$ for all $i,j\in\CC_{\mu_l}$ and $\mathbf{A}_{ii}(\mu_l)\in\R^{b\times b}$ to each node $i\in\CC_{\mu_l}$. Introduce a function $\sigma(\cdot):\Z^{\ge 0}\to\{\mu_1,\mu_2,\dots,\mu_d\}$. We define a clique gossip protocol over the graph $\mathrm{G}$ as follows.

\begin{definition}(Clique-gossip Protocol)   At each time $t=0,1,\dots$, one clique $\CC_{\sigma(t)}\in\mathsf{H}_{_{\mathrm{G}}}^\ast$ is selected.  The nodes update their states by
\begin{equation}\notag
\xb_i(t+1)=\left\{
\begin{aligned}
& \sum_{j\in\CC_{\sigma(t)}}\mathbf{A}_{ij}(\sigma(t))\xb_j(t)&\textnormal{ if }i\in\CC_{\sigma(t)};\\
& \xb_i(t)&\textnormal{ if }i\notin\CC_{\sigma(t)}.
\end{aligned}
\right.
\end{equation}
\end{definition}

Note that the signal $\sigma(\cdot)$ plays a role in selecting a clique gossip process which can be deterministic or random.  After $\CC_{\sigma(t)}$ is determined at time $t$, the nodes within the clique $\CC_{\sigma(t)}$ interact with each other as specified by the state transition matrices $\mathbf{A}_{ij}(\mu_l)$. We  remark that at this point we are not imposing any conditions on the  $\mathbf{A}_{ij}(\mu_l)$, whose choices depend on the requirements for individual problems.  For each clique $\CC_{\mu_l}\in\mathsf{H}_{_{\mathrm{G}}}^\ast$, we define a block matrix $\Mb_{\mu_l}\in\R^{nb\times nb}$ whose diagonal blocks equal $\Ib_b$ except the $ii$th block is $\mathbf{A}_{ii}(\mu_l)$ for all $i\in\CC_{\mu_l}$, and off-diagonal blocks equal $\mathbf{0}_b$ except the $ij$th block is $\mathbf{A}_{ij}(\mu_l)$ and the $ji$th block is $\mathbf{A}_{ij}(\mu_l)$ for all $i,j\in\CC_{\mu_l},i\neq j$. Then the above clique-gossip protocol can be put in vector form equivalently
\begin{align}\label{eq:linearsystem}
\xb(t+1) = \Mb_{\sigma(t)} \xb(t),
\end{align}
where $\xb(t)=[\xb_1(t)^\top \dots \xb_n(t)^\top]^\top\in\R^{nb}$.

Therefore, by our definition a clique gossip protocol can be any linear dynamical system that runs over the network $\mathrm{G}$, under which the node interactions take place along a sequence of cliques.  Practically of course we would like the system  ($\ref{eq:linearsystem}$) to asymptotically  converge, preferably to some intrinsically nontrivial limits as solvers to certain network computation problems. This leads us to wonder how we can design  the $\mathbf{A}_{ij}(\mu_l)$ to meet such a criterion in practice. We present the following example as a network linear equation solver \cite{mou2017eigenvalue}.

\medskip

\noindent{\bf Example 1.} Consider a linear algebraic equation with respect to the unknown variable $\yb\in\R^m$
\begin{equation}\label{eq:linearequation}
\Hb\yb = \zb
\end{equation}
with $\Hb\in\R^{n\times m},\zb\in\R^n$. Then (\ref{eq:linearequation}) can be expressed in a system of linear equations $\hb_i^\top\yb=z_i,\ i=1,\dots,n$, where $\hb_i^\top\in\R^m$ denotes the $i$-th row of $\Hb$ and $z_i\in\R$ is the $i$-th component of $\zb$. Assume that (\ref{eq:linearequation}) has a unique solution.

Consider an $n$-node graph $\mathrm{G}=(\mathrm{V},\mathrm{E})$ with a clique coverage $\mathsf{H}_{_{\mathrm{G}}}^\ast$. We let each node $i\in\mathrm{V}$ hold a linear equation $\hb_i^\top\yb=z_i$ and be assigned a state $\xb_i(t),t=0,1,2,\dots$. Suppose each node $i$ is only permitted to share its state with its neighbors. Inspired by the distributed linear equation solver developed in \cite{mou2017eigenvalue} by using the conventional gossip protocol, we apply the clique-gossip protocol to solve (\ref{eq:linearequation}) in a distributed sense as follows. At each time $t=0,1,2,\dots$, we choose $\CC_{\sigma(t)}\in\mathsf{H}_{_{\mathrm{G}}}^\ast$. Then for those nodes $i\notin\CC_{\sigma(t)}$, $\xb_i(t+1)=\xb_i(t)$. For $i\in\CC_{\sigma(t)}$, the update rule is
\begin{align}
\xb_i(t+1) &= \Pb_i(\sum_{j\in\CC_{\sigma(t)}} \xb_j(t)/|\CC_{\sigma(t)}|-\xb_i(t)) + \xb_i(t) \notag\\
&= (\Ib_m-(|\CC_{\sigma(t)}|-1)/|\CC_{\sigma(t)}|\Pb_i)\xb_i(t) + \sum_{j\in\CC_{\sigma(t)},j\neq i}\Pb_i \xb_j(t)/|\CC_{\sigma(t)}|, \label{eq:ex1_updaterule1}
\end{align}
where $\Pb_i=\Ib_m-\hb_i\hb_i^\top/(\hb_i^\top\hb_i)\in\R^{m\times m}$ denotes the projection matrix to the kernel of $\hb_i^\top$. Now we see from (\ref{eq:ex1_updaterule1}) that the solver is an instance of the clique-gossip protocol by letting
\begin{equation}\notag
\Ab_{ij}(\sigma(t))=\left\{
\begin{aligned}
&  \Ib_m-(|\CC_{\sigma(t)}|-1)/|\CC_{\sigma(t)}|\Pb_i&\textnormal{ if }i=j,i,j\in\CC_{\sigma(t)};\\
&\Pb_i/|\CC_{\sigma(t)}|&\textnormal{ if }i\neq j,i,j\in\CC_{\sigma(t)},
\end{aligned}
\right.
\end{equation}
which in turn determines a particular $\Mb_{\sigma(t)}$. One can easily prove that the distributed linear equation solver developed using the clique-gossip protocol drives all nodes of the network to asymptotically compute the solution of (\ref{eq:linearequation}) if the sequence $\CC_{\sigma(0)},$ $\CC_{\sigma(1)},$ $\CC_{\sigma(2)},$ $\dots$ is periodic and the elements in its subsequence over any one period form the clique coverage $\mathsf{H}_{_{\mathrm{G}}}^\ast$.

\subsection{Clique-gossip Averaging Algorithm}\label{sec:clique_gossip_algorithm}

One primary gossip protocol comes from the case where nodes simply average their current states during their meetings, leading to the so-called random or deterministic gossip algorithms. Such gossip algorithms serve as algorithmic descriptions of node interactions over time, and the simple structure of such gossip algorithms enables clear investigation of the convergence rates related to the underlying network structure. Therefore, despite the fact that the exact node interactions can have various different forms in real-world gossip protocols,  the corresponding gossip algorithm is a good indicator to the performance of the protocols.    In the same spirit  now  we  define a clique-gossip averaging algorithm as follows.

\begin{definition}\label{def:algorithm}(Clique-gossip Averaging Algorithm)  Let $\xb_i(t)\in \R$. At time $t$, $\CC_{\sigma(t)}\in\mathsf{H}_{_{\mathrm{G}}}^\ast$ is selected.  The nodes update their states by
\begin{equation}\notag
\xb_i(t+1)=\left\{
\begin{aligned}
& \sum_{j\in\CC_{\sigma(t)}} \xb_j(t)/|\CC_{\sigma(t)}|&\textnormal{ if }i\in\CC_{\sigma(t)};\\
& \xb_i(t)&\textnormal{ if }i\notin\CC_{\sigma(t)}.
\end{aligned}
\right.
\end{equation}
\end{definition}
We can see that the clique-gossip averaging algorithm is an instance of the clique-gossip protocol by setting $\Ab_{ij}(\sigma(t))=1/|\CC_{\sigma(t)}|$ for all $i,j\in\CC_{\sigma(t)}$, which in turn determine $\Mb_{\sigma(t)}$.

\section{Clique Gossip Protocols}\label{sec:invariance}
In this section, we investigate deterministic clique-gossip protocols with periodic clique selections. For the purpose of guaranteeing the reaching of global agreement and the formulation of an eigenvalue invariance theorem, we introduce the following assumption on the function $\sigma(t)$.
\begin{assumption}\label{ass1}
(i) $\sigma(\cdot):\Z^{\ge 0}\to\{\mu_1,\mu_2,\dots,\mu_d\}$ is a periodic function with period $d$; (ii) $\sigma(t)$ visits each element in $\{\mu_1,\mu_2,\dots,\mu_d\}$ once and only once in any period.
\end{assumption}

\subsection{Eigenvalue Invariance}
\subsubsection{Generalized Line Graph}
Recall that given a graph $\mathrm{G}$, its conventional line graph $\mathcal{K}(\mathrm{G})$ is defined by the requirements (i) each node of $\mathcal{K}(\mathrm{G})$ represents an edge of $\mathrm{G}$; (ii) two nodes of $\mathcal{K}(\mathrm{G})$ are linked if and only if the corresponding edges of $\mathrm{G}$ share a common endpoint. In the following, we define the generalized line graph $\mathcal{L}(\mathsf{H}_{_{\mathrm{G}}}^\ast)$ for a graph $\mathrm{G}$ based on the clique coverage $\mathsf{H}_{_{\mathrm{G}}}^\ast$.

\begin{definition}
Let $\mathsf{H}_{_{\mathrm{G}}}^\ast$ be a clique coverage of $\mathrm{G}$. Its generalized line graph, $\mathcal{L}(\mathsf{H}_{_{\mathrm{G}}}^\ast)=(\mathcal{V}(\mathsf{H}_{_{\mathrm{G}}}^\ast),\mathcal{E}(\mathsf{H}_{_{\mathrm{G}}}^\ast))$, is an undirected graph defined by $\mathcal{V}(\mathsf{H}_{_{\mathrm{G}}}^\ast)=\{\CC_i: \CC_i\in\mathsf{H}_{_{\mathrm{G}}}^\ast\}$ and $\mathcal{E}(\mathsf{H}_{_{\mathrm{G}}}^\ast)=\big\{(\CC_i,\CC_j)\in\mathcal{V}(\mathsf{H}_{_{\mathrm{G}}}^\ast)\times\mathcal{V}(\mathsf{H}_{_{\mathrm{G}}}^\ast):\CC_i\cap\CC_j\neq\emptyset,\ i\neq j\big\}$.
\end{definition}
 A sequence $\CC_{i_1},\CC_{i_2},\dots,\CC_{i_k}$ is called a path of cliques if $\CC_{i_j},\CC_{i_{j+1}}$ are adjacent for all $j=1,\dots,k-1$. A cycle of $\mathcal{L}(\mathsf{H}_{_{\mathrm{G}}}^\ast)$ is a path of cliques $\CC_{i_1},\CC_{i_2},\dots,\CC_{i_k}$ such that $\CC_{i_j}\in\mathsf{H}_{_{\mathrm{G}}}^\ast$ for all $j=1,\dots,k$ and $\CC_{i_1}=\CC_{i_k}$.

\begin{figure}[h]
\centering
\includegraphics[width=3.6in]{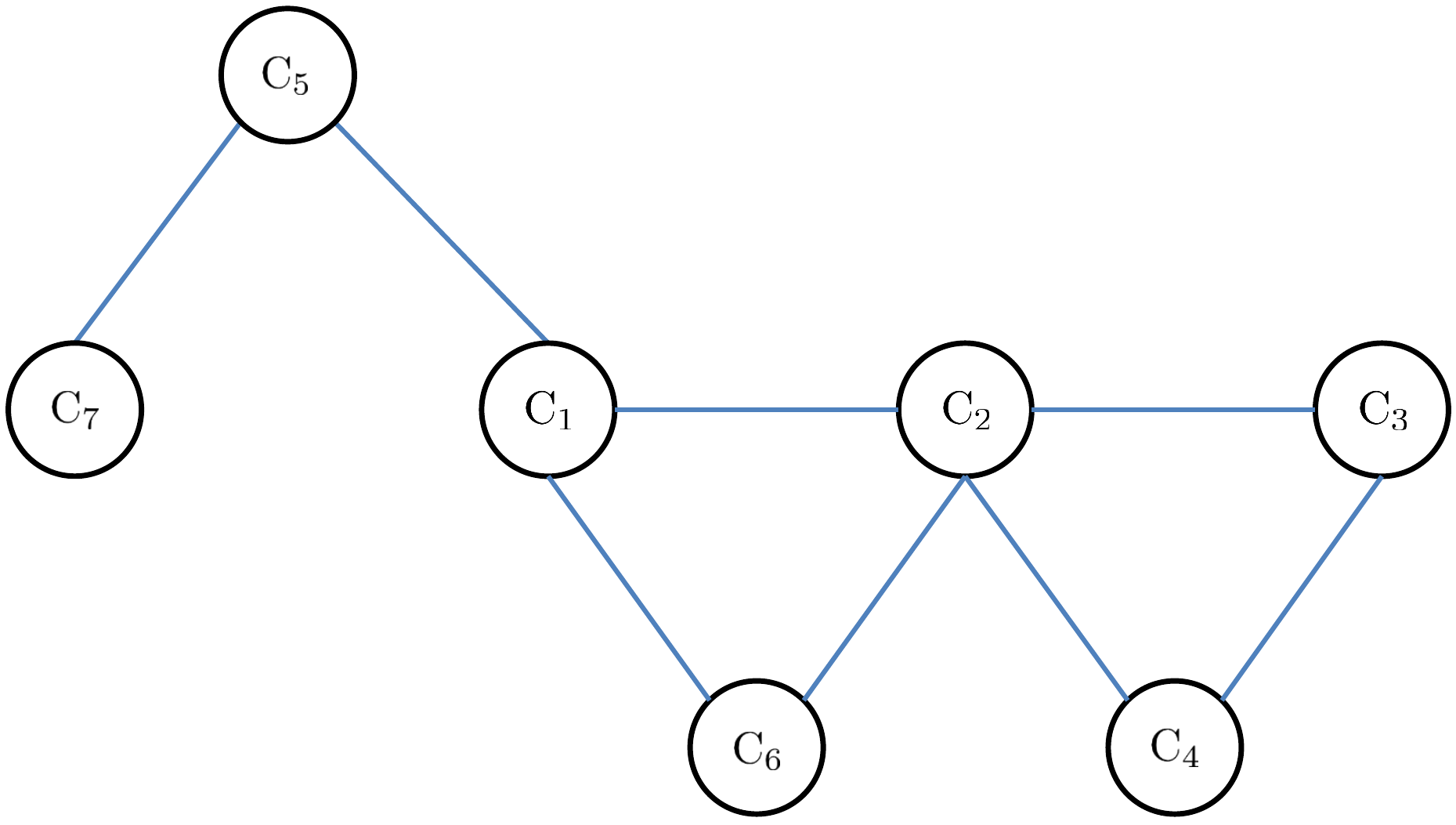}
\caption{The generalized line graph $\mathcal{L}(\mathsf{H}_{_{\mathrm{G}}}^\ast)$ for $\mathrm{G}$ given in Figure \ref{fig:ori}.}
\label{fig:line}
\end{figure}

Note that the generalized line graph is equivalent to the conventional line graph if every clique in the clique coverage contains two nodes. An illustration of a connected graph $\mathrm{G}$ is shown in Figure \ref{fig:ori}, with its generalized line graph given in Figure \ref{fig:line}.
\begin{lemma}
Let $\mathrm{G}$ be a connected graph  with a clique coverage $\mathsf{H}_{_{\mathrm{G}}}^\ast$. Then $\mathcal{L}(\mathsf{H}_{_{\mathrm{G}}}^\ast)$ is a connected graph.
\end{lemma}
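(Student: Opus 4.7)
The plan is to prove connectivity of $\mathcal{L}(\mathsf{H}_{_{\mathrm{G}}}^\ast)$ by taking two arbitrary clique-nodes $\CC_a,\CC_b\in\mathsf{H}_{_{\mathrm{G}}}^\ast$ and explicitly constructing a path between them, using the fact that the union graph $\bigcup_{\CC_i\in\mathsf{H}_{_{\mathrm{G}}}^\ast}\mathrm{G}[\CC_i]$ is connected (part of the clique-coverage definition). The idea is that every edge of the union graph lies inside some clique of $\mathsf{H}_{_{\mathrm{G}}}^\ast$, so paths in $\mathrm{G}$ translate to chains of overlapping cliques.

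First I would fix any $u\in\CC_a$ and $v\in\CC_b$. If $u=v$ (in particular if $\CC_a\cap\CC_b\neq\emptyset$), then $\CC_a$ and $\CC_b$ are already adjacent in $\mathcal{L}(\mathsf{H}_{_{\mathrm{G}}}^\ast)$ and we are done. Otherwise, connectivity of the union graph supplies a walk $u=v_0,v_1,\dots,v_k=v$ in $\bigcup_{\CC_i\in\mathsf{H}_{_{\mathrm{G}}}^\ast}\mathrm{G}[\CC_i]$. For each $l\in\{0,1,\dots,k-1\}$, the edge $(v_l,v_{l+1})$ must lie in $\mathrm{G}[\CC_{i_l}]$ for some $\CC_{i_l}\in\mathsf{H}_{_{\mathrm{G}}}^\ast$, so both $v_l$ and $v_{l+1}$ belong to $\CC_{i_l}$.

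Next I would assemble the sequence of cliques $\CC_a,\CC_{i_0},\CC_{i_1},\dots,\CC_{i_{k-1}},\CC_b$. Consecutive terms in this list share at least one common vertex: $\CC_a$ and $\CC_{i_0}$ both contain $v_0=u$; $\CC_{i_l}$ and $\CC_{i_{l+1}}$ both contain $v_{l+1}$; and $\CC_{i_{k-1}}$ and $\CC_b$ both contain $v_k=v$. Hence each pair of consecutive cliques is either identical or adjacent in $\mathcal{L}(\mathsf{H}_{_{\mathrm{G}}}^\ast)$. Deleting any repetitions yields an honest path from $\CC_a$ to $\CC_b$ in $\mathcal{L}(\mathsf{H}_{_{\mathrm{G}}}^\ast)$, proving connectedness.

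The argument is essentially a translation between two notions of connectivity, so I do not anticipate any real obstacle; the only mild care is in handling degenerate cases where consecutive cliques in the constructed sequence coincide, which is resolved by collapsing repetitions before declaring the sequence a path in $\mathcal{L}(\mathsf{H}_{_{\mathrm{G}}}^\ast)$.
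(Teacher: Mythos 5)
Your proposal is correct and follows essentially the same argument as the paper: pick a node in each clique, take a path between them in the connected union graph, assign each edge of that path to a covering clique, and read off the resulting chain of pairwise-intersecting cliques as a path in $\mathcal{L}(\mathsf{H}_{_{\mathrm{G}}}^\ast)$. Your extra care with the degenerate cases (coinciding endpoints, consecutive repeated cliques) only tidies up a point the paper handles by simply allowing repeated elements in its path of cliques.
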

\begin{proof}
For two arbitrary cliques $\CC_u,\CC_v\in\mathsf{H}_{_{\mathrm{G}}}^\ast$, we select two nodes $k_u\in\CC_u,k_v\in\CC_v$ of $\mathrm{G}$. Then there exists a path in the union graph $\bigcup\limits_{\CC_i\in\mathsf{H}_{_{\mathrm{G}}}^\ast}\mathrm{G}[\CC_i]$, denoted as $(k_u,k_1),(k_1,k_2),\dots,(k_{n-1},k_n),(k_n,k_v)$, that connects $k_u$ and $k_v$. For the sake of convenience, we let $k_0=k_u,k_{n+1}=k_v$. As a result, there exists a clique $\CC_{j_m}$ such that $(k_m,k_{m+1})$ is an edge in $\mathrm{G}[\CC_{j_m}]$ for each $m=0,\dots,n$. Therefore, the sequence $\CC_u,\CC_{j_0},\dots,\CC_{j_n},\CC_v$, which can have consecutive repeated elements, is a path of cliques that connects $\CC_u$ and $\CC_v$. This completes the proof.
\end{proof}

\subsubsection{A Spectrum Invariance Theorem}

Introduce $\Fb=\Mb_{\sigma(d)}\dots\Mb_{\sigma(1)}$ (Blocks $\Ab_{ij}(\sigma(t)),i,j\in\sigma(t)$ in $\Mb_{\sigma(t)},t=1,\dots,d$ are arbitrary). Then $\Fb$ is the state transition matrix for the periodic gossiping protocol defined by a periodic signal $\sigma(\cdot)$ with period $d$. Let $\pi(\cdot)$ be a permutation with order $d$, i.e., $\pi(\cdot)$ is a one-to-one mapping from $\{1,\dots,d\}$ to $\{1,\dots,d\}$. Denote $\Fb_{\pi}=\Mb_{\sigma(\pi(d))}\dots\Mb_{\sigma(\pi(1))}$. This represents the state transition matrix generated by a permuted order of clique selections. Let $\cp(\mathbf{Q})$ denote the characteristic polynomial for a matrix $\mathbf{Q}$.

Define $\pi_s,s=1,\dots,d-1$ as the swapping permutation over $\{1,\dots,d\}$ with $\pi_s(s)=s+1,\pi_s(s+1)=s$, and $\pi_s(i)=i,i\neq s,i\neq s+1$. In the following, we present a theorem regarding the eigenvalue invariance of the state transition matrix under swapping permutations, generalizing the result of \cite{mou2017eigenvalue}.

\begin{theorem}\label{thm:1}
Let $\mathsf{H}_{_{\mathrm{G}}}^\ast=\{\CC_{\mu_1},\dots,\CC_{\mu_d}\}$ be a clique coverage of $\mathrm{G}=(\mathrm{V},\mathrm{E})$ and let Assumption \ref{ass1} hold. Then along any periodic clique-gossip protocol there holds  $\cp(\Fb) = \cp(\Fb_{\pi_s})$ if $s$ satisfies one of the following conditions:
\begin{enumerate}[(i)]
\item $\CC_{\sigma(s)}$ and $\CC_{\sigma(s+1)}$ are not adjacent,
\item $\CC_{\sigma(s)}$ and $\CC_{\sigma(s+1)}$ are adjacent but neither of them is contained in any cycles of $\mathcal{L}(\mathsf{H}_{_{\mathrm{G}}}^\ast)$.
\end{enumerate}

\end{theorem}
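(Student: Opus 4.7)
The plan is to treat the two cases separately, with the classical identity $\cp(\Ab\Bb) = \cp(\Bb\Ab)$ as the workhorse.

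For condition (i), I would write $\Mb_{\sigma(s)} = \Ib + \mathbf{N}_s$ and $\Mb_{\sigma(s+1)} = \Ib + \mathbf{N}_{s+1}$, where $\mathbf{N}_s$ and $\mathbf{N}_{s+1}$ have block supports on $\CC_{\sigma(s)}\times\CC_{\sigma(s)}$ and $\CC_{\sigma(s+1)}\times\CC_{\sigma(s+1)}$ respectively. Non-adjacency gives $\CC_{\sigma(s)}\cap\CC_{\sigma(s+1)}=\emptyset$, so these supports are disjoint, forcing $\mathbf{N}_s\mathbf{N}_{s+1}=\mathbf{N}_{s+1}\mathbf{N}_s=\mathbf{0}$, whence $\Mb_{\sigma(s)}$ and $\Mb_{\sigma(s+1)}$ commute. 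Therefore $\Fb=\Fb_{\pi_s}$ literally, and their characteristic polynomials trivially agree.

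For condition (ii) the first step is a cyclic reduction: setting $\mathbf{Y}:=\Mb_{\sigma(s-1)}\cdots\Mb_{\sigma(1)}\Mb_{\sigma(d)}\cdots\Mb_{\sigma(s+2)}$ and using $\cp(\Ab\Bb)=\cp(\Bb\Ab)$ repeatedly, one has $\cp(\Fb) = \cp(\Mb_{\sigma(s)}\mathbf{Y}\Mb_{\sigma(s+1)})$ and $\cp(\Fb_{\pi_s}) = \cp(\Mb_{\sigma(s+1)}\mathbf{Y}\Mb_{\sigma(s)})$, so the theorem reduces to $\cp(\Mb_{\sigma(s)}\mathbf{Y}\Mb_{\sigma(s+1)}) = \cp(\Mb_{\sigma(s+1)}\mathbf{Y}\Mb_{\sigma(s)})$. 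The next step is to exploit cycle-freeness structurally. Since neither $\CC_{\sigma(s)}$ nor $\CC_{\sigma(s+1)}$ lies on a cycle of $\mathcal{L}(\mathsf{H}_{_{\mathrm{G}}}^\ast)$, the edge between them is a bridge, and any node in $\mathrm{I}:=\CC_{\sigma(s)}\cap\CC_{\sigma(s+1)}$ belonging to a third clique would close a triangle in $\mathcal{L}(\mathsf{H}_{_{\mathrm{G}}}^\ast)$. Removing this bridge splits the remaining cliques $\mathsf{H}_{_{\mathrm{G}}}^\ast\setminus\{\CC_{\sigma(s)},\CC_{\sigma(s+1)}\}$ into the $\CC_{\sigma(s)}$-side and the $\CC_{\sigma(s+1)}$-side, whose node supports are mutually disjoint and avoid $\mathrm{I}$. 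This induces a node partition $\mathrm{V}=\mathrm{L}\sqcup\mathrm{I}\sqcup\mathrm{R}$ in which $\mathbf{Y}=\bar{\mathbf{Y}}_{\mathrm{L}}\oplus\Ib_{\mathrm{I}}\oplus\bar{\mathbf{Y}}_{\mathrm{R}}$ is block diagonal, while $\Mb_{\sigma(s)}$ acts only on $\mathrm{L}\cup\mathrm{I}$ and $\Mb_{\sigma(s+1)}$ only on $\mathrm{I}\cup\mathrm{R}$.

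The concluding step is a Schur-complement computation. I would pivot on the $\mathrm{R}$-block in both $\lambda\Ib - \Mb_{\sigma(s)}\mathbf{Y}\Mb_{\sigma(s+1)}$ and $\lambda\Ib - \Mb_{\sigma(s+1)}\mathbf{Y}\Mb_{\sigma(s)}$; the two pivot determinants are $\det(\lambda\Ib_{\mathrm{R}}-\bar{\mathbf{Y}}_{\mathrm{R}}\Qb^{(\mathrm{RR})})$ and $\det(\lambda\Ib_{\mathrm{R}}-\Qb^{(\mathrm{RR})}\bar{\mathbf{Y}}_{\mathrm{R}})$, equal by $\cp(\Ab\Bb)=\cp(\Bb\Ab)$. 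The main obstacle, and the crux of the argument, is showing the two Schur complements themselves have equal determinants; this hinges on the resolvent identity $(\lambda\Ib-\Ab\Bb)^{-1}\Ab = \Ab(\lambda\Ib-\Bb\Ab)^{-1}$, which forces the ``effective'' $\hat{\Qb}_{\mathrm{II}}$ emerging from the two pivots to be the same rational matrix function of $\lambda$. Once that coincidence is in hand, one more application of $\cp(\Ab\Bb)=\cp(\Bb\Ab)$ on the remaining factor closes the computation, and a polynomial-identity argument in $\lambda$ removes the pivot-invertibility assumption.
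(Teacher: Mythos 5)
Your handling of condition (i) and your structural analysis for condition (ii) are correct and essentially match the paper's. The paper arrives at the same partition by defining $\mathrm{A}_1$ (resp.\ $\mathrm{A}_2$) as the set of cliques reachable from $\CC_{\sigma(s)}$ (resp.\ $\CC_{\sigma(s+1)}$) by a path of cliques avoiding the other, and deducing from condition (ii) that these two families exhaust the remaining cliques and have mutually disjoint node supports; your bridge-removal picture, including the observation that no third clique can meet $\mathrm{I}=\CC_{\sigma(s)}\cap\CC_{\sigma(s+1)}$, is the same decomposition in different language. The cyclic reduction to $\cp(\Mb_{\sigma(s)}\mathbf{Y}\Mb_{\sigma(s+1)})=\cp(\Mb_{\sigma(s+1)}\mathbf{Y}\Mb_{\sigma(s)})$ is also fine, as is the claim that the two $\mathrm{R}$-pivot determinants agree.

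The gap is the concluding Schur-complement step. Its central claim --- that the two Schur complements on the $\mathrm{I}$-block, i.e.\ the two ``effective'' $\hat{\Qb}_{\mathrm{II}}(\lambda)$, coincide --- is essentially the assertion of the theorem in this case, and you do not prove it: the appeal to the resolvent identity is a gesture, not an argument, and the later removal of the pivot-invertibility assumption cannot repair a step that was never established. Moreover the detour is unnecessary, because the block structure you have already set up closes the proof by pure commutation, which is exactly what the paper does in its Step 3. Writing $\mathbf{Y}=\mathbf{Y}_{\mathrm{L}}\mathbf{Y}_{\mathrm{R}}=\mathbf{Y}_{\mathrm{R}}\mathbf{Y}_{\mathrm{L}}$ with $\mathbf{Y}_{\mathrm{L}}$ supported on $\mathrm{L}$ and $\mathbf{Y}_{\mathrm{R}}$ on $\mathrm{R}$, you have $\Mb_{\sigma(s)}\mathbf{Y}_{\mathrm{R}}=\mathbf{Y}_{\mathrm{R}}\Mb_{\sigma(s)}$ and $\Mb_{\sigma(s+1)}\mathbf{Y}_{\mathrm{L}}=\mathbf{Y}_{\mathrm{L}}\Mb_{\sigma(s+1)}$, and therefore
\begin{align*}
\cp(\Mb_{\sigma(s)}\mathbf{Y}\Mb_{\sigma(s+1)})
&=\cp\big((\Mb_{\sigma(s)}\mathbf{Y}_{\mathrm{L}})(\mathbf{Y}_{\mathrm{R}}\Mb_{\sigma(s+1)})\big)
=\cp\big(\mathbf{Y}_{\mathrm{R}}\Mb_{\sigma(s+1)}\Mb_{\sigma(s)}\mathbf{Y}_{\mathrm{L}}\big)
=\cp\big(\mathbf{Y}\Mb_{\sigma(s+1)}\Mb_{\sigma(s)}\big),\\
\cp(\Mb_{\sigma(s+1)}\mathbf{Y}\Mb_{\sigma(s)})
&=\cp\big(\mathbf{Y}_{\mathrm{L}}\Mb_{\sigma(s+1)}\Mb_{\sigma(s)}\mathbf{Y}_{\mathrm{R}}\big)
=\cp\big(\mathbf{Y}\Mb_{\sigma(s+1)}\Mb_{\sigma(s)}\big),
\end{align*}
where the first chain uses only regrouping and two cyclic shifts via $\cp(\Ab\Bb)=\cp(\Bb\Ab)$, and the second uses the two commutation relations followed by one cyclic shift. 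Both sides equal $\cp(\mathbf{Y}\Mb_{\sigma(s+1)}\Mb_{\sigma(s)})$, which finishes condition (ii). So the structural insight you isolated is the real content of the proof; you should replace the unfinished Schur-complement computation with this three-line commutation argument.
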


\begin{proof}
It is evident for Condition (i) that $\Mb_{\sigma(s)}\Mb_{\sigma(s+1)}=\Mb_{\sigma(s+1)}\Mb_{\sigma(s)}$ if $\CC_{\sigma(s)}$ and $\CC_{\sigma(s+1)}$ are not adjacent. Thus, in the rest of the proof, we focus on proving $\cp(\Fb)=\cp(\Fb_{\pi_s})$ for $s$ satisfying Condition (ii), i.e.,
\[
\Fb = \Mb_{\sigma(d)}\dots\Mb_{\sigma(s+2)}\Mb_{\sigma(s+1)}\Mb_{\sigma(s)}\dots\Mb_{\sigma(1)},
\]
\[
\Fb_{\pi_s} = \Mb_{\sigma(d)}\dots\Mb_{\sigma(s+2)}\Mb_{\sigma(s)}\Mb_{\sigma(s+1)}\dots\Mb_{\sigma(1)}.
\]
Now we take three steps to complete the proof.

\noindent Step 1. Since $\cp(\Ab\Bb)=\cp(\Bb\Ab)$ for any $\Ab,\Bb\in\R^{nb\times nb}$ (Theorem 1.3.22 \cite{horn2012matrix}), we have
\begin{equation}\label{eq:reorder}
\cp(\Fb) = \cp(\Mb_{\sigma(s+1)}\Mb_{\sigma(s)}\Mb_{\sigma(s-1)}\dots\Mb_{\sigma(1)}\Mb_{\sigma(d)}\dots\Mb_{\sigma(s+2)}).
\end{equation}

\noindent Step 2. In this step, we reorganize the terms in the product $\Mb_{\sigma(s-1)}\dots\Mb_{\sigma(1)}\Mb_{\sigma(d)}\dots\Mb_{\sigma(s+2)}$ by repeatedly interchanging the two consecutive commutable terms. Denote $$\mathrm{A}_1=\{j:\textrm{there exists a path of cliques between }\CC_{\sigma(j)}\textrm{ and }\CC_{\sigma(s)}\textrm{ that does not pass through }\CC_{\sigma(s+1)}\},$$ $$\mathrm{A}_2=\{k:\textrm{there exists a path of cliques between }\CC_{\sigma(k)}\textrm{ and }\CC_{\sigma(s+1)}\textrm{ that does not pass through }\CC_{\sigma(s)}\}.$$
Based on Condition (ii), there hold
\begin{enumerate}[(i)]
\item $\mathrm{A}_1\mcup\mathrm{A}_2=\{s-1,\dots,1,d,\dots,s+2\}$;
\item $\mathrm{A}_1\mcap\mathrm{A}_2=\emptyset$;
\item $\CC_j\mcap\CC_k=\emptyset$ for all $j\in\mathrm{A}_1,k\in\mathrm{A}_2$.
\end{enumerate}
Therefore, $\Mb_{\sigma(j)}\Mb_{\sigma(k)}=\Mb_{\sigma(k)}\Mb_{\sigma(j)}$ if $j\in\mathrm{A}_1,k\in\mathrm{A}_2$ and $j,k$ are two consecutive entries in the sequence $s-1,\dots,1,d,\dots,s+2$. Then it follows
\begin{equation}\label{eq:MMM_P2P1}
\Mb_{\sigma(s-1)}\dots\Mb_{\sigma(1)}\Mb_{\sigma(d)}\dots\Mb_{\sigma(s+2)} = \Pb_2\Pb_1 = \Pb_1\Pb_2,
\end{equation}
where
\[\Pb_1=\Mb_{\sigma(j_{|\mathrm{A}_1|})}\dots\Mb_{\sigma(j_1)},
\]
\[
\Pb_2=\Mb_{\sigma(k_{|\mathrm{A}_2|})}\dots\Mb_{\sigma(k_1)},
\]
with $j_{|\mathrm{A}_1|},\dots,j_1\in\mathrm{A}_1,k_{|\mathrm{A}_2|},\dots,k_1\in\mathrm{A}_2$ following the same order as they are in the sequence $s-1,\dots,1,$ $d,\dots,s+2$. Plugging (\ref{eq:MMM_P2P1}) into (\ref{eq:reorder}), we obtain
\begin{equation}\label{eq:MMP2P1}
\cp(\Fb) = \cp(\Mb_{\sigma(s+1)}\Mb_{\sigma(s)}\Pb_2\Pb_1).
\end{equation}

\noindent Step 3. In this step, we prove $\cp(\Fb)=\cp(\Fb_{\pi})$ and complete the proof. We observe that $\Mb_{\sigma(s)}\Pb_2=\Pb_2\Mb_{\sigma(s)}$, due to the fact that $\CC_{\sigma(s)}$ and $\CC_{\sigma(k)}$ are not adjacent for all $k\in\mathrm{A}_2$. Then (\ref{eq:MMP2P1}) yields
\begin{equation}\label{eq:MP2MP1}
\cp(\Fb) = \cp(\Mb_{\sigma(s+1)}\Pb_2\Mb_{\sigma(s)}\Pb_1),
\end{equation}
which in turn gives
\begin{equation}\label{eq:MP1MP2}
\cp(\Fb) = \cp(\Mb_{\sigma(s)}\Pb_1\Mb_{\sigma(s+1)}\Pb_2).
\end{equation}
Similarly, we also know $\Mb_{\sigma(s+1)}\Pb_1=\Pb_1\Mb_{\sigma(s+1)}$ because $\CC_{\sigma(s+1)}$ and $\CC_{\sigma(j)}$ are not adjacent for all $j\in\mathrm{A}_1$. Finally, we have
\begin{equation}
\begin{aligned}\notag
\cp(\Fb)
&\stackrel{a)}{=}\cp(\Mb_{\sigma(s)}\Mb_{\sigma(s+1)}\Pb_1\Pb_2)\\
&\stackrel{b)}{=}\cp(\Mb_{\sigma(s)}\Mb_{\sigma(s+1)}\Mb_{\sigma(s-1)}\dots\Mb_{\sigma(1)}\Mb_{\sigma(d)}\dots\Mb_{\sigma(s+2)})\\
&\stackrel{c)}{=}\cp(\Mb_{\sigma(d)}\dots\Mb_{\sigma(s+2)}\Mb_{\sigma(s)}\Mb_{\sigma(s+1)}\dots\Mb_{\sigma(1)})\\
&=\cp(\Fb_{\pi_s}),\\
\end{aligned}
\end{equation}
where $a)$ follows from (\ref{eq:MP1MP2}), $b)$ is acquired based on (\ref{eq:MMM_P2P1}), and $c)$ is again due to the fact that $\cp(\Ab\Bb)=\cp(\Bb\Ab)$ for any $\Ab,\Bb\in\R^{nb\times nb}$. This completes the proof.
\end{proof}

\begin{corollary}
Let Assumption \ref{ass1} hold. Then along any clique-gossip protocol there holds there  holds $\cp(\Fb)=\cp(\Fb_\pi)$ for any permutation $\pi$ if the generalized line graph $\mathcal{L}(\mathsf{H}_{_{\mathrm{G}}}^\ast)$ contains no cycle.
\end{corollary}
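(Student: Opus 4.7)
The plan is to derive the corollary as a direct consequence of Theorem \ref{thm:1} by leveraging the fact that the set of adjacent (swap) transpositions $\{\pi_1,\pi_2,\dots,\pi_{d-1}\}$ generates the entire symmetric group on $\{1,\dots,d\}$. Concretely, given any permutation $\pi$, I would write $\pi = \pi_{s_r}\circ\pi_{s_{r-1}}\circ\cdots\circ\pi_{s_1}$ as a product of adjacent transpositions (for instance via a bubble-sort decomposition), and then show by induction on $k=1,\dots,r$ that applying the partial composition $\pi_{s_k}\circ\cdots\circ\pi_{s_1}$ to the ordering of clique selections does not change the characteristic polynomial of the resulting state transition matrix.

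The inductive step is where Theorem \ref{thm:1} is invoked, and the crucial observation is that \emph{every} adjacent swap is admissible under the no-cycle hypothesis on $\mathcal{L}(\mathsf{H}_{_{\mathrm{G}}}^\ast)$. Indeed, at any intermediate sequence and any position $s$, the two cliques $\CC_{\sigma(s)}$ and $\CC_{\sigma(s+1)}$ either are not adjacent in $\mathcal{L}(\mathsf{H}_{_{\mathrm{G}}}^\ast)$, in which case condition (i) of Theorem \ref{thm:1} applies, or they are adjacent in $\mathcal{L}(\mathsf{H}_{_{\mathrm{G}}}^\ast)$, in which case the absence of any cycle in $\mathcal{L}(\mathsf{H}_{_{\mathrm{G}}}^\ast)$ immediately guarantees that neither clique is contained in a cycle, so condition (ii) applies. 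Either way, $\cp(\cdot)$ is preserved under the swap.

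The slightly delicate point — and the only place I would be careful — is that Theorem \ref{thm:1} is stated for $\Fb$ as a product ordered by $\sigma(1),\dots,\sigma(d)$, whereas after several swaps the ordering has changed. This is not a genuine obstacle: Theorem \ref{thm:1}'s proof only uses the commutation structure induced by the underlying clique coverage and its generalized line graph, which are unchanged under any reordering of the sequence. Hence one can simply replace the signal $\sigma(\cdot)$ at each stage by its reordered version $\sigma\circ\pi_{s_{k-1}}\circ\cdots\circ\pi_{s_1}$, observe that Assumption \ref{ass1} continues to hold for this reordered signal (it still visits each clique exactly once per period), and re-apply Theorem \ref{thm:1} at position $s_k$. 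Chaining the $r$ invariances yields $\cp(\Fb) = \cp(\Fb_\pi)$ as desired. No computational difficulty is expected; the entire argument is a combinatorial reduction to the theorem just proved.
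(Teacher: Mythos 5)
Your proposal is correct and follows essentially the same route as the paper: decompose $\pi$ into adjacent transpositions and apply Theorem \ref{thm:1} to each swap, noting that the no-cycle hypothesis on $\mathcal{L}(\mathsf{H}_{_{\mathrm{G}}}^\ast)$ guarantees that every adjacent pair of cliques satisfies condition (i) or (ii). The one point you flag --- that after each swap the theorem must be re-applied to the \emph{reordered} signal, which still satisfies Assumption \ref{ass1} and lives over the same unchanged generalized line graph --- is handled more explicitly in your write-up than in the paper's own (rather terse) proof, and is a worthwhile clarification rather than a deviation.
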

\begin{proof}
From \cite{fraleigh2003first} we know that an arbitrary permutation from $\{1,\dots,d\}$ to $\{1,\dots,d\}$ can be generated by $d-1$ swapping permutations $\pi_1,\dots,\pi_{d-1}$. Therefore, we only need to prove $\cp(\Fb)=\cp(\Fb_{\pi_s})$ for all $s=1,\dots,d-1$. Based on the condition of the corollary that $\mathcal{L}(\mathsf{H}_{_{\mathrm{G}}}^\ast)$ contains no cycle, either of the two conditions in Theorem \ref{thm:1} is met for all $s$. Hence it can be concluded that $\cp(\Fb)=\cp(\Fb_{\pi_s})$ for all $s$. This completes the proof.
\end{proof}

\subsection{Performance Analysis}
In this section, we analyze the convergence performance of periodic clique-gossip protocols. A precise  definition of the convergence of  a clique-gossip protocol is given below.
\begin{definition}\label{def:convergence}
A clique-gossip protocol is   convergent if there holds
 $$
 \lim\limits_{t\to\infty}\xb(t)=\bar{\yb}(\xb(0))
 $$ for all $\xb(0)\in\R^{nb}$, where $\bar{\yb}(\xb(0))$ is a static state depending perhaps  on $\xb(0)$.
\end{definition}
Let $\Mb_{\sigma(t)},t=0,1,2,\dots$ define a clique-gossip protocol in the form of (\ref{eq:linearsystem}). For a periodic clique-gossip protocol with period $d\in\Z^+$,  we term\footnote{Note that $\Fb=\Mb_{\sigma(d)}\dots\Mb_{\sigma(1)}$ and $\Fb_d=\Mb_{\sigma(d-1)}\dots\Mb_{\sigma(0)}$ have the same spectrum regardless of the underlying network structure and the generalized line graph of the clique coverage.  }
\[
\Fb_d=\Mb_{\sigma(d-1)}\dots\Mb_{\sigma(0)},
\]
as a period-based state transition matrix in view of the recursion
\[
\xb((q+1)d)=\Fb_d\xb(qd),\ q=0,1,2,\dots.
\]

Let $\sigma(\mathbf{A})$ and $\rho(\mathbf{A})$ denote the spectrum and spectral radius of a matrix $A$, respectively. The following lemma holds from  the basic knowledge of linear systems.
\begin{lemma}\label{lem:convergence}
Let Assumption 1 hold. Let the periodic clique-gossip protocol admit a period-based state transition matrix $\Fb_d\in\R^{nb\times nb}$. The protocol is convergent if and only if the following conditions hold:
\begin{enumerate}[(i)]
\item $\rho(\Fb_d)\le 1$;
\item If $1\in\sigma(\Fb_d)$, then eigenvalue one has equal algebraic multiplicity and geometric multiplicity;
\item If $\lambda\in\sigma(\Fb_d)$ and $|\lambda|=1$, then $\lambda=1$;
\item There holds $\Mb_{\sigma(k)}\dots\Mb_{\sigma(0)} \betab =\betab$ for all $\betab\in\mathbb{I}:=\{\alphab \in \mathbb{R}^{nb}:\Fb_d\alphab=\alphab\}$.
\end{enumerate}
\end{lemma}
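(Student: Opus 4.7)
The plan is to split the time index as $t = qd + r$ with integer $q\ge 0$ and $r\in\{0,1,\dots,d-1\}$, so that
\[
\xb(qd+r) \;=\; \Mb_{\sigma(r-1)}\cdots\Mb_{\sigma(0)}\,\Fb_d^{\,q}\,\xb(0),
\]
with the convention that the empty product (when $r=0$) equals $\Ib_{nb}$. The partial products appearing here are exactly those constrained by condition (iv). Convergence of $\xb(t)$ for all $\xb(0)$ thus decouples into two requirements: first, $\Fb_d^{\,q}$ must converge as $q\to\infty$; and second, the finitely many intermediate partial products must not disturb the resulting limit. Throughout I will rely on the standard Jordan-form fact that the powers $\Ab^{\,q}$ of a square matrix $\Ab$ converge iff $\rho(\Ab)\le 1$, every unit-modulus eigenvalue equals $1$, and the eigenvalue $1$ is semisimple; in that case the limit is the spectral projector $\Pb$ onto $\ker(\Ab-\Ib)$ along $\mathrm{range}(\Ab-\Ib)$.

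For the sufficiency direction, conditions (i)--(iii) are precisely the three hypotheses of this Jordan criterion applied to $\Fb_d$, so $\lim_{q\to\infty}\Fb_d^{\,q}=\Pb$ exists with range equal to $\mathbb{I}$. For any $\xb(0)$ we then have $\Pb\xb(0)\in\mathbb{I}$, and condition (iv) gives $\Mb_{\sigma(r-1)}\cdots\Mb_{\sigma(0)}\Pb\xb(0)=\Pb\xb(0)$ for every $r$. Consequently all $d$ subsequences $\{\xb(qd+r)\}_{q\ge 0}$ share the common limit $\Pb\xb(0)=:\bar{\yb}(\xb(0))$, and hence $\lim_{t\to\infty}\xb(t)=\bar{\yb}(\xb(0))$ in the sense of Definition \ref{def:convergence}.

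For the necessity direction, assume $\lim_{t\to\infty}\xb(t)$ exists for every $\xb(0)$. Restricting to $t=qd$ forces $\Fb_d^{\,q}\xb(0)$ to converge for every initial vector, hence $\Fb_d^{\,q}$ converges entrywise; the Jordan criterion then yields (i)--(iii) and identifies the limit as a projector $\Pb$ whose range is $\mathbb{I}$. Matching the limit along $t=qd+r$ against the limit along $t=qd$ yields $\Mb_{\sigma(r-1)}\cdots\Mb_{\sigma(0)}\Pb=\Pb$, which says that $\Mb_{\sigma(r-1)}\cdots\Mb_{\sigma(0)}$ fixes every vector in $\mathrm{range}(\Pb)=\mathbb{I}$; this is exactly (iv). The one genuinely delicate point I expect to dwell on is the clean identification of $\mathrm{range}(\Pb)$ with $\mathbb{I}$ together with the semisimplicity implication that $\Pb$ really is a projector (so that $\Pb\xb(0)$ is automatically a fixed point of $\Fb_d$); everything else is bookkeeping over the $d$ subsequences.
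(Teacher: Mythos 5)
Your proof is correct and complete: the decomposition $\xb(qd+r)=\Mb_{\sigma(r-1)}\cdots\Mb_{\sigma(0)}\Fb_d^{\,q}\xb(0)$, the Jordan-form criterion for convergence of $\Fb_d^{\,q}$ giving (i)--(iii), the identification $\mathrm{range}(\Pb)=\mathbb{I}$, and the matching of the $d$ subsequential limits to extract (iv) are exactly the ``basic knowledge of linear systems'' that the paper invokes without writing out any proof. Nothing is missing.
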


Next, we define precisely the convergence rates of   standard gossiping and clique-gossiping protocols.
\begin{definition}\label{def:convergence_rate}
For a convergent  clique-gossip protocol,  the rate of  convergence is $\mathcal{O}(\nu^t)$ if there exists a unique $\nu\in(0,1)$ such that
\begin{equation}\notag
0<\limsup\limits_{t\to\infty} \frac{\|\xb(t)-\bar{\yb}\|}{\nu^t}<+\infty,\ \forall\xb(0)\notin\mathbb{I},
\end{equation}
where $\bar{\yb}=\lim\limits_{t\to\infty}\xb(t)$.
\end{definition}

 Introduce the symbol $|\lambda_2(\Fb)|$ as the magnitude of the eigenvalues of $\Fb\in\R^{nb\times nb}$ with the second largest modulus of all its eigenvalues, i.e., $|\lambda_2(\Fb)|=\max\{|\lambda|:\lambda\in\sigma(\Fb),|\lambda|<\rho(\Fb)\}$.
Let $\ceil{x}$ be the smallest integer greater than or equal to $x\in\R$ and $\floor{x}$ be the largest integer less than or equal to $x\in\R$. Now we present a proposition that reveals the relationship between $|\lambda_2(\Fb_d)|$ and the convergence rate $\nu$ for clique-gossping.
\begin{proposition}\label{prop:convergence_rate}
Let Assumption 1 hold and consider the resulting  periodic clique-gossip protocol with period $d\in\Z^+$. Let $\Fb_d\in\R^{nb\times nb}$ be a period-based state transition matrix of the protocol and assume the protocol is convergent. Suppose $|\lambda_2(\Fb_d)|>0$. Then there holds $
\nu=|\lambda_2(\Fb_d)|^{1/d}$, i.e., the  convergence rate of the protocol  is $\mathcal{O}\big(|\lambda_2(\Fb_d)|^{t/d}\big)$.
\end{proposition}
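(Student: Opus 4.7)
The plan is to reduce the asymptotic behaviour of $\xb(t)$ to that of the period-based iterate $\Fb_d^{\,q}$ and then invoke the standard spectral theory of matrix powers. Lemma~\ref{lem:convergence} tells us that all eigenvalues of $\Fb_d$ lie in the closed unit disk, that the eigenvalue $1$ is semisimple, and that the only unimodular eigenvalue is $1$; combined with the hypothesis $|\lambda_2(\Fb_d)|>0$ this yields $|\lambda_2(\Fb_d)|\in(0,1)$. I would then decompose $\R^{nb}=\mathbb{I}\oplus\mathbb{W}$, where $\mathbb{W}$ is the $\Fb_d$-invariant complement formed from the generalized eigenspaces of all eigenvalues $\lambda\neq 1$. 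The steady state is the projection of $\xb(0)$ onto $\mathbb{I}$ along $\mathbb{W}$, equivalently $\bar{\yb}=\lim_{q\to\infty}\Fb_d^{\,q}\xb(0)$, so that $\xb(0)-\bar{\yb}\in\mathbb{W}$ whenever $\xb(0)\notin\mathbb{I}$.

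The next step is to slice the time index as $t=qd+r$ with $0\leq r<d$ and use the periodicity of $\sigma$ to factor
\[
\xb(t)=\mathbf{G}_r\,\Fb_d^{\,q}\,\xb(0),\qquad \mathbf{G}_r:=\Mb_{\sigma(r-1)}\cdots\Mb_{\sigma(0)},
\]
with the convention $\mathbf{G}_0=\Ib$. Since $\bar{\yb}\in\mathbb{I}$, condition~(iv) of Lemma~\ref{lem:convergence} gives $\mathbf{G}_r\bar{\yb}=\bar{\yb}$, while $\Fb_d\bar{\yb}=\bar{\yb}$ gives $\Fb_d^{\,q}\bar{\yb}=\bar{\yb}$. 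Subtracting produces the key identity
\[
\xb(t)-\bar{\yb}=\mathbf{G}_r\,\Fb_d^{\,q}\bigl(\xb(0)-\bar{\yb}\bigr),
\]
so that the error is driven entirely by the iterate $\Fb_d^{\,q}$ acting on a vector in $\mathbb{W}$, premultiplied by one of the finitely many bounded operators $\mathbf{G}_0,\dots,\mathbf{G}_{d-1}$.

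Finally, the restriction $\Fb_d|_{\mathbb{W}}$ has spectral radius exactly $|\lambda_2(\Fb_d)|$, so the Jordan form (equivalently, Gelfand's formula) yields two-sided estimates of the form $\|\Fb_d^{\,q}\vb\|=|\lambda_2(\Fb_d)|^{q}\rho_q(\vb)$ with $\rho_q(\vb)^{1/q}\to 1$ for every $\vb\in\mathbb{W}$ whose component along the maximal-modulus eigenspace of $\Fb_d|_{\mathbb{W}}$ is nontrivial. Using $q=(t-r)/d$ and the uniform bound $\max_{0\leq r<d}\|\mathbf{G}_r\|<\infty$ translates these into
\[
\|\xb(t)-\bar{\yb}\|=\Theta\bigl(|\lambda_2(\Fb_d)|^{t/d}\bigr),
\]
whence $\nu=|\lambda_2(\Fb_d)|^{1/d}$. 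I expect the main obstacle to lie in the careful bookkeeping of both sides of the $\limsup$: the upper bound must absorb any polynomial prefactor produced by nontrivial Jordan blocks at the subdominant eigenvalue, and the lower bound must exploit the ``unique $\nu$'' clause in Definition~\ref{def:convergence_rate} to rule out the degenerate case where $\xb(0)-\bar{\yb}$ happens to sit inside a strictly faster-contracting invariant subspace of $\Fb_d|_{\mathbb{W}}$.
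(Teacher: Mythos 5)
Your proposal is correct and follows essentially the same route as the paper: the paper likewise reduces everything to the power iteration of $\Fb_d$ (invoking ``basic knowledge of the stability of linear systems'' for the two-sided estimate you would derive explicitly from the Jordan form on the complement of $\mathbb{I}$) and controls the within-period factors by uniform bounds $B_1=\max_k\|\Mb_{\sigma(k)}\cdots\Mb_{\sigma(0)}\|$ and $B_2$ on the partial products, which are exactly your operators $\mathbf{G}_r$. The degenerate case you flag at the end --- an initial condition whose error component lies entirely in a faster-contracting invariant subspace of $\Fb_d$ --- is likewise left unaddressed in the paper's proof (which simply asserts the constant $C(\xb(0))$), so your write-up is, if anything, more explicit about where the argument is delicate.
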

\begin{proof}
By the basic knowledge of the stability of linear systems
\begin{equation}\label{eq:rate1}
\limsup\limits_{t\to\infty}\frac{\|\xb(\floor{\frac{t}{d}}d)-\bar{\yb}\|}{|\lambda_2(\Fb_d)|^{\floor{\frac{t}{d}}}}=\limsup\limits_{t\to\infty}\frac{\|\xb(\ceil{\frac{t}{d}}d)-\bar{\yb}\|}{|\lambda_2(\Fb_d)|^{\ceil{\frac{t}{d}}}}=C(\xb(0)).
\end{equation}
with $\bar{\yb}=\lim\limits_{t\to\infty}\xb(t)$, where $C(\xb(0))$ is a constant relying on the network initial value. Let $\Mb_{\sigma(t)},t\ge 0$ be the state transition matrix corresponding to each $\CC_{\sigma(t)}$ in the clique gossip sequence. By the convergence of $\xb(t)$, we have
\begin{equation}\label{eq:rate01}
\bar{\yb}=\lim\limits_{t\to\infty}\xb(t)
=\lim\limits_{q\to\infty}\xb((q+1)d)=\lim\limits_{t\to\infty} \mathbf{F}_d\xb(qd)=\mathbf{F}_d\bar{\yb},
\end{equation}
which yields $\bar{\yb}\in\mathbb{I}$ for all $\xb(0)$. By (\ref{eq:rate1}), (\ref{eq:rate01}), and Lemma \ref{lem:convergence}.(iv), we obtain
\begin{align}
\limsup\limits_{t\to\infty}\frac{\|\xb(t)-\bar{\yb}\|}{|\lambda_2(\Fb_d)|^{t/d}}
&=\limsup\limits_{t\to\infty}\frac{\|\Mb_{\sigma(t-1)}\dots\Mb_{\sigma(\floor{\frac{t}{d}}d)}\xb(\floor{\frac{t}{d}}d)-\bar{\yb}\|}{|\lambda_2(\Fb_d)|^{t/d}}\notag\\
&= \limsup\limits_{t\to\infty}\frac{\|\Mb_{\sigma(t-1)}\dots\Mb_{\sigma(\floor{\frac{t}{d}}d)}(\xb(\floor{\frac{t}{d}}d)-\bar{\yb})\|}{|\lambda_2(\Fb_d)|^{t/d}}\notag\\
&\le \|\Mb_{\sigma(t-1)}\dots\Mb_{\sigma(\floor{\frac{t}{d}}d)}\|\limsup\limits_{t\to\infty}\frac{\|\xb(\floor{\frac{t}{d}}d)-\bar{\yb}\|}{|\lambda_2(\Fb_d)|^{t/d}}\notag\\
&\le \|\Mb_{\sigma(t-1)}\dots\Mb_{\sigma(\floor{\frac{t}{d}}d)}\|\limsup\limits_{t\to\infty}\frac{\|\xb(\floor{\frac{t}{d}}d)-\bar{\yb}\|}{|\lambda_2(\Fb_d)|^{\floor{\frac{t}{d}}}}|\lambda_2(\Fb_d)|^{\floor{\frac{t}{d}}-\frac{t}{d}}\notag\\
&<|\lambda_2(\Fb_d)|^{-1}B_1C(\xb(0)),\label{eq:rate2}
\end{align}
with $B_1=\max\{\|\Mb_{\sigma(k)}\dots\Mb_{\sigma(0)}\|:k=0,\dots,d-1\}$. Similarly, noticing
\begin{equation}\notag
\limsup\limits_{t\to\infty}\frac{\|\Mb_{\sigma(\ceil{\frac{t}{d}}d-1)}\dots\Mb_{\sigma(t)}\xb(t)-\bar{\yb}\|}{|\lambda_2(\Fb_d)|^{t/d}}
=\limsup\limits_{t\to\infty}\frac{\|\xb(\ceil{\frac{t}{d}}d)-\bar{\yb}\|}{|\lambda_2(\Fb_d)|^{t/d}},
\end{equation} one also has
\begin{equation}\label{eq:rate3}
\limsup\limits_{t\to\infty}\frac{\|\xb(t)-\bar{\yb}\|}{|\lambda_2(\Fb_d)|^{t/d}} > |\lambda_2(\Fb_d)|B_2^{-1}C(\xb(0)),
\end{equation}
with $B_2=\min\{\|\Mb_{\sigma(d-1)}\dots\Mb_{\sigma(k)}\|:k=0,\dots,d-1\}$. From (\ref{eq:rate2}) and (\ref{eq:rate3}), the desired characterization to the rate of convergence follows.
\end{proof}

We note that the above discussions on convergence and convergence rate of clique-gossip protocols cover standard gossip protocols since standard gossiping is a special case of clique-gossiping with all cliques being node pairs.

\subsection{Multi-clique-gossip Protocols}
Recall that in a clique-gossip protocol, one clique is selected at each time slot and the clique-gossiping operation is undertaken among all nodes in this clique, while the other nodes maintain their states. For the purpose of speeding up the information spreading over networks, we generalize the notion of multi-gossip in a standard gossip process \cite{liu2011deterministic} to a clique gossip. Based on the clique coverage $\mathsf{H}_{_{\mathrm{G}}}^\ast=\{\CC_{\mu_1},\dots,\CC_{\mu_d}\}$, we define a multi-clique coverage $\mathpzc{M}(\mathsf{H}_{_{\mathrm{G}}}^\ast)=\{\mathsf{C}_1,\dots,\mathsf{C}_{\nu}\}$ with each $\mathsf{C}_k\subset\mathsf{H}_{_{\mathrm{G}}}^\ast$, termed a clique class, being a set of $\CC_{\mu_l}$, where $\bigcup_{k=1}^{\nu}\mathsf{C}_k=\mathsf{H}_{_{\mathrm{G}}}^\ast$, and where any two cliques in one clique class $\mathsf{C}_k$ are non-adjacent for all $k=1,\dots,\nu$. We also define $\mathrm{C}_i^\ast(\mathsf{C}_k)\in\mathsf{C}_k$ as the unique clique containing node $i$ and belonging to $\mathsf{C}_k$. Introduce a function $\sigma(\cdot):\Z^{\ge 0}\to\{1,\dots,\nu\}$. Then a multi-clique-gossip protocol is defined as follows.

\medskip

\noindent{\bf Example 2.} Consider the graph $\mathrm{G}$ in Figure \ref{fig:ori}. Let cliques $\CC_1=\{1,2,3\},\CC_2=\{3,4\},\CC_3=\{4,5,6,7\},\CC_4=\{4,6,7,8\},\CC_5=\{2,9,13\},\CC_6=\{3,11,12\},\CC_7=\{9,10\}$ form its clique coverage $\mathsf{H}_{_{\mathrm{G}}}^\ast$. Then its generalized line graph $\mathcal{L}(\mathsf{H}_{_{\mathrm{G}}}^\ast)$ is given in Figure \ref{fig:line}. By observing $\mathcal{L}(\mathsf{H}_{_{\mathrm{G}}}^\ast)$, we can obtain that possible multi-clique coverages $\mathpzc{M}(\mathsf{H}_{_{\mathrm{G}}}^\ast)$ include $\big\{\{\CC_1,\CC_4,\CC_7\},\{\CC_3,\CC_5,\CC_6\},\{\CC_2\}\big\},\big\{\{\CC_1,\CC_7\},\{\CC_2,\CC_5\},\{\CC_4,\CC_6\},\{\CC_3\}\big\}$.

\begin{definition}
(Multi-clique-gossip Protocol) Select $\mathsf{C}_{\sigma(t)}\in\mathpzc{M}(\mathsf{H}_{_{\mathrm{G}}}^\ast)$ at each time $t=0,1,\dots$. Then the node state update rule is described by
\begin{equation}\notag
\xb_i(t+1)=\left\{
\begin{aligned}
& \sum_{j\in\mathrm{C}_i^\ast(\mathsf{C}_{\sigma(t)})}\mathbf{A}_{ij}(\mu_l)\xb_j(t)&\textnormal{ if } i\in\bigcup_{\CC_{\mu_l}\in\mathsf{C}_{\sigma(t)}}\CC_{\mu_l};\\
& \xb_i(t)&\textnormal{ if } i\notin\bigcup_{\CC_{\mu_l}\in\mathsf{C}_{\sigma(t)}}\CC_{\mu_l}.
\end{aligned}
\right.
\end{equation}
\end{definition}
We can see that in contrast to the clique-gossip protocol, the multi-clique-gossip protocol allows multiple \emph{non-adjacent} cliques to perform internal clique-gossip operations simultaneously. Evidently, the simultaneous operations over non-adjacent cliques are not mutually influential because no node serves as the intermediary for information transmission.
By direct intuition, we know that in order to speed up the convergence to a global agreement, one should arrange as many cliques as possible to perform gossiping operation in every time slot, i.e., minimize $|\mathpzc{M}(\mathsf{H}_{_{\mathrm{G}}}^\ast)|$. Define the clique-class index by $\rho(\mathsf{H}_{_{\mathrm{G}}}^\ast)=\min\{|\mathpzc{M}(\mathsf{H}_{_{\mathrm{G}}}^\ast)|:\mathpzc{M}(\mathsf{H}_{_{\mathrm{G}}}^\ast)\textnormal{ is a multi-clique coverage induced by }\mathsf{H}_{_{\mathrm{G}}}^\ast\}$. Let $\Delta(\mathcal{L}(\mathsf{H}_{_{\mathrm{G}}}^\ast))$ denote the maximum node degree of the generalized line graph $\mathcal{L}(\mathsf{H}_{_{\mathrm{G}}}^\ast)$. Define $\alpha(\mathcal{L}(\mathsf{H}_{_{\mathrm{G}}}^\ast))=\max\{|\mathsf{C}|:\mathsf{C}\subset\mathsf{H}_{_{\mathrm{G}}}^\ast | \CC_i\cap\CC_j=\emptyset,\forall\CC_i,\CC_j\in\mathsf{C}\}$ as the independence number of $\mathcal{L}(\mathsf{H}_{_{\mathrm{G}}}^\ast)$. Then we have the following proposition.
\begin{proposition}\label{prop:1}
If $\mathcal{L}(\mathsf{H}_{_{\mathrm{G}}}^\ast)$ is neither a complete graph nor a cycle graph with an odd number of nodes, then
\begin{equation}\notag
\frac{|\mathsf{H}_{_{\mathrm{G}}}^\ast|}{\alpha(\mathcal{L}(\mathsf{H}_{_{\mathrm{G}}}^\ast))}\le\rho(\mathsf{H}_{_{\mathrm{G}}}^\ast)\le\Delta(\mathcal{L}(\mathsf{H}_{_{\mathrm{G}}}^\ast)).
\end{equation}
In particular, $\rho(\mathsf{H}_{_{\mathrm{G}}}^\ast)=|\mathsf{H}_{_{\mathrm{G}}}^\ast|$ if $\mathcal{L}(\mathsf{H}_{_{\mathrm{G}}}^\ast)$ is a complete graph, and $\rho(\mathsf{H}_{_{\mathrm{G}}}^\ast)=3$ if $\mathcal{L}(\mathsf{H}_{_{\mathrm{G}}}^\ast)$ is a cycle graph with an odd number of nodes.
\end{proposition}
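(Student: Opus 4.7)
The plan is to recognize that $\rho(\mathsf{H}_{_{\mathrm{G}}}^\ast)$ is essentially the chromatic number of the generalized line graph $\mathcal{L}(\mathsf{H}_{_{\mathrm{G}}}^\ast)$, and then invoke Brooks' theorem together with elementary pigeonhole counting. The translation between clique classes and colorings is the first thing I would establish: although the definition only requires $\bigcup_k \mathsf{C}_k = \mathsf{H}_{_{\mathrm{G}}}^\ast$, any multi-clique coverage can be thinned (by removing repeated cliques from all but one class) to a partition without increasing $|\mathpzc{M}(\mathsf{H}_{_{\mathrm{G}}}^\ast)|$. Since the cliques within one class are pairwise non-adjacent in $\mathrm{G}$, they form an independent set in $\mathcal{L}(\mathsf{H}_{_{\mathrm{G}}}^\ast)$, so $\rho(\mathsf{H}_{_{\mathrm{G}}}^\ast)=\chi(\mathcal{L}(\mathsf{H}_{_{\mathrm{G}}}^\ast))$.

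Given this identification, the lower bound $\rho(\mathsf{H}_{_{\mathrm{G}}}^\ast)\ge |\mathsf{H}_{_{\mathrm{G}}}^\ast|/\alpha(\mathcal{L}(\mathsf{H}_{_{\mathrm{G}}}^\ast))$ is immediate from pigeonhole: any partition into independent sets (color classes) uses classes of size at most $\alpha(\mathcal{L}(\mathsf{H}_{_{\mathrm{G}}}^\ast))$, and these classes must collectively cover all $|\mathsf{H}_{_{\mathrm{G}}}^\ast|$ vertices. For the upper bound $\rho(\mathsf{H}_{_{\mathrm{G}}}^\ast)\le \Delta(\mathcal{L}(\mathsf{H}_{_{\mathrm{G}}}^\ast))$, I would apply Brooks' theorem to $\mathcal{L}(\mathsf{H}_{_{\mathrm{G}}}^\ast)$. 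We already know from the preceding lemma that $\mathcal{L}(\mathsf{H}_{_{\mathrm{G}}}^\ast)$ is connected; the hypothesis of the proposition rules out exactly the two exceptional families (complete graphs and odd cycles) for which Brooks' theorem fails, so $\chi(\mathcal{L}(\mathsf{H}_{_{\mathrm{G}}}^\ast))\le \Delta(\mathcal{L}(\mathsf{H}_{_{\mathrm{G}}}^\ast))$ holds.

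For the two special cases, the computation is direct. If $\mathcal{L}(\mathsf{H}_{_{\mathrm{G}}}^\ast)$ is a complete graph then every pair of cliques in $\mathsf{H}_{_{\mathrm{G}}}^\ast$ is adjacent, so every clique class is a singleton and $\rho(\mathsf{H}_{_{\mathrm{G}}}^\ast)=|\mathsf{H}_{_{\mathrm{G}}}^\ast|$. If $\mathcal{L}(\mathsf{H}_{_{\mathrm{G}}}^\ast)$ is an odd cycle then its chromatic number is well known to equal $3$ (two colors fail by parity, three colors suffice by an explicit alternating-plus-one coloring), giving $\rho(\mathsf{H}_{_{\mathrm{G}}}^\ast)=3$.

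The main obstacle is essentially bookkeeping rather than mathematical depth: Brooks' theorem supplies the nontrivial content of the upper bound, so the only real work is to pin down the equivalence $\rho(\mathsf{H}_{_{\mathrm{G}}}^\ast)=\chi(\mathcal{L}(\mathsf{H}_{_{\mathrm{G}}}^\ast))$ cleanly (including the harmless reduction from covering to partitioning) and to verify that the two exceptional configurations listed in the proposition are exactly the ones that force Brooks' theorem to fail. Once the chromatic-number reformulation is in place, every remaining step is either a direct citation or a one-line counting argument.
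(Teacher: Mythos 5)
Your proposal is correct and follows the same route as the paper's proof: identify $\rho(\mathsf{H}_{_{\mathrm{G}}}^\ast)$ with the chromatic number $\chi(\mathcal{L}(\mathsf{H}_{_{\mathrm{G}}}^\ast))$, get the lower bound by pigeonhole on independent sets, and get the upper bound from Brooks' theorem applied to the connected generalized line graph, with the two exceptional families handled directly. You actually spell out the covering-to-partition reduction and the pigeonhole step more explicitly than the paper does, but the argument is the same.
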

\begin{proof}
The left inequality naturally holds because of the definitions of $\alpha(\cdot)$ and $\rho(\cdot)$. Now we prove the right inequality. Consider the vertex coloring problem of the generalized line graph $\mathcal{L}(\mathsf{H}_{_{\mathrm{G}}}^\ast)$, which is a labeling of the graph's nodes with colors such that any two nodes which are the endpoints of some edge have different colors. We denote the smallest number of colors needed to color the nodes of graph $\mathcal{L}(\mathsf{H}_{_{\mathrm{G}}}^\ast)$, namely its chromatic number, as $\chi(\mathcal{L}(\mathsf{H}_{_{\mathrm{G}}}^\ast))$. It can observed that based on the same clique coverage $\mathsf{H}_{_{\mathrm{G}}}^\ast$, the minimum number of clique classes equals the chromatic number of its generalized line graph, i.e., $\rho(\mathsf{H}_{_{\mathrm{G}}}^\ast)=\chi(\mathcal{L}(\mathsf{H}_{_{\mathrm{G}}}^\ast))$. Then by  Brooks' Theorem \cite{brooks1941colouring}, $\chi(\mathrm{G})\le\Delta(\mathrm{G})$ if $\mathrm{G}$ is a simple connected graph but not a complete graph or a cycle graph with odd nodes. Therefore, we have $\rho(\mathsf{H}_{_{\mathrm{G}}}^\ast)\le\Delta(\mathcal{L}(\mathsf{H}_{_{\mathrm{G}}}^\ast))$ unless $\mathcal{L}(\mathsf{H}_{_{\mathrm{G}}}^\ast)$ is a complete graph or a cycle graph with odd nodes. The particular values of $\rho(\mathsf{H}_{_{\mathrm{G}}}^\ast)$ for complete graphs and odd cycle graphs result easily from their specific structures.
\end{proof}
From the proof of Proposition \ref{prop:1}, it can be seen that finding the clique classes of a graph is equivalent to finding a vertex coloring of its generalized line graph. It is known \cite{mou2010deterministic} that finding the number of conventional multigossips of a graph is intrinsically obtaining an edge coloring of the graph, which is a labeling of the edges of the graph such that any two edges sharing the same endpoint have different colors. Then it follows that the edge coloring of the graph is equivalent to a vertex coloring of its conventional line graph. Since the conventional line graph is a special case of the generalized line graph by letting every clique in $\mathsf{H}_{_{\mathrm{G}}}^\ast$ possess two nodes, we can conclude that the problem of finding the clique classes of a graph in this paper is consistent with the result regarding multigossips in \cite{mou2010deterministic}.

It is hard to find $\rho(\mathsf{H}_{_{\mathrm{G}}}^\ast)$ of an arbitrary graph $\mathrm{G}$. Inspired by the greedy algorithm in \cite{cormen1990introduction}, however, we can generate a multi-clique coverage $\mathpzc{M}(\mathsf{H}_{_{\mathrm{G}}}^\ast)$ from the clique coverage $\mathsf{H}_{_{\mathrm{G}}}^\ast$ by visiting every node of $\mathcal{L}(\mathsf{H}_{_{\mathrm{G}}}^\ast)$ in order and assign it into the first available clique class, so that we can obtain a relatively small $|\mathpzc{M}(\mathsf{H}_{_{\mathrm{G}}}^\ast)|$.

\subsection{Numerical Examples}
In this section, we provide a few numerical examples to illustrate the result in Theorem \ref{thm:1} and investigate the performance of the clique-gossip averaging algorithm by comparing it to the standard gossip algorithm, and the multi-clique-gossiping in contrast to pure clique-gossiping.

\subsubsection{Validation of Theorem \ref{thm:1}}
The following example validates the result in Theorem \ref{thm:1}.

\noindent {\bf Example 3.} Consider the graph $\mathrm{G}$ in Figure \ref{fig:ori} with the clique coverage $\mathsf{H}_{_{\mathrm{G}}}^\ast=\{\CC_1,\dots,\CC_7\}$, where $\CC_1,\dots,\CC_7$ are specified in Figure \ref{fig:ori}.  In order to validate two conditions in Theorem \ref{thm:1}, we compute the spectrum of
$$\Fb = \Mb_1\Mb_4\Mb_5\Mb_7\Mb_2\Mb_3\Mb_6,$$
$$\Fb_{\pi_6}=\Mb_1\Mb_4\Mb_5\Mb_7\Mb_2\Mb_6\Mb_3,$$
$$\Fb_{\pi_3}=\Mb_1\Mb_4\Mb_7\Mb_5\Mb_2\Mb_3\Mb_6,$$
where $\Mb_{\mu},\mu=1,\dots,7$, corresponding to $\CC_{\mu},\mu=1,\dots,7$, are as defined as in Section \ref{sec:clique_gossip_algorithm}. This implies that $\Fb,\Fb_{\pi_6},\Fb_{\pi_3}$ are the state transition matrices for the clique-gossip averaging algorithm. Obviously $\pi_6$ is the permutation that interchanges two non-adjacent cliques $\CC_3,\CC_6$, and $\pi_3$ interchanges $\CC_5,\CC_7$, neither of which is contained in any cycle of $\mathcal{L}(\mathsf{H}_{_{\mathrm{G}}}^\ast)$ plotted in Figure \ref{fig:line}.  As computed, $\sigma(\Fb)=\sigma(\Fb_{\pi_6})=\sigma(\Fb_{\pi_3})=\{1,0.8504,0.6920,0.3683,0.1522,0.1871,0,0,0,0,0,0,0\}$, i.e., $\cp(\Fb)=\cp(\Fb_{\pi_6})=\cp(\Fb_{\pi_3})$. This is consistent with Theorem \ref{thm:1}.

\subsubsection{Performance Discussion}

In the following example, we compare the convergence speed of the periodic clique-gossip averaging algorithm and the standard periodic gossip averaging algorithm and discuss the performance improvement with the application of multi-clique-gossiping.

\noindent {\bf Example 4.} Consider the graph $\mathrm{G}$ in Figure \ref{fig:ori}. Let $\Mb_{\mu},\mu=1,\dots,7$ be the same as in Example 2. Denote the system states along standard gossiping and clique-gossiping as $\xb_g(t),\xb_c(t),t=0,1,2,\dots$, respectively. Define
\begin{equation}\notag
\Fb_c = \Mb_5\Mb_7\Mb_1\Mb_6\Mb_2\Mb_3\Mb_4
\end{equation}
as the period-based state transition matrix for the periodic clique-gossip averaging algorithm. Then we let $(2,13),$$(2,9),$$(9,13),$$(9,10),$$(1,2),$$(1,3),$$(2,3),$$(3,12),$$(3,11),$$(11,13),$$(3,4),$$(8,11),$$(4,5),$$(5,6),$$(4,6),$$(6,7),$     $(7,8),$     $(4,8),$$(4,7),(6,8)$ be a $20$ edges gossip sequence for the standard periodic gossip averaging algorithm. Denote $\Fb_g$ as the period-based state transition matrix corresponding to the standard gossip sequence. First we compute that the second largest eigenvalues of $\Fb_g,\Fb_c$ are $0.8061,0.8504$, respectively. Then by Proposition \ref{prop:convergence_rate}, the convergence speed for these two algorithms can be represented by $\sqrt[20]{0.8061},\sqrt[7]{0.8504}$, respectively.

Next we plot the trajectories of error $e(t)=\sum\limits_{i=1}^7 (\xb_i(t)-\bar{\xb})^2$, with $\bar{\xb}=\sum\limits_{i=1}^7 \xb_i(0)/7$, for these two algorithms in Figure \ref{fig:example3}. It is known \cite{liu2011deterministic} that the second largest eigenvalue of the state transition matrix determines the convergence speed of its corresponding algorithm. Then we can conclude from the fact that $\sqrt[7]{0.8504}<\sqrt[20]{0.8061}$ that the periodic clique-gossip averaging algorithm has faster convergence speed than the standard periodic gossip averaging algorithm. Moreover, this conclusion is directly verified as shown in Figure \ref{fig:example3}, because it can be clearly seen that the trajectories of error for clique-gossiping are steeper than gossiping, implying that the states of all nodes of $\mathrm{G}$ undertaking clique-gossiping approach the average $\bar{\xb}$ faster.

Based on the clique coverage $\mathsf{H}_{_{\mathrm{G}}}^\ast=\{\CC_1,\dots,\CC_7\}$ and the generalized line graph $\mathcal{L}(\mathsf{H}_{_{\mathrm{G}}}^\ast)$ in Figure \ref{fig:line}, we define a multi-clique coverage $\mathpzc{M}(\mathsf{H}_{_{\mathrm{G}}}^\ast)=\{\mathsf{C}_1,\mathsf{C}_2,\mathsf{C}_3\}$ with $\mathsf{C}_1=\{\CC_1,\CC_4,\CC_7\},\mathsf{C}_2=\{\CC_3,\CC_5,\CC_6\},\mathsf{C}_3=\{\CC_2\}$. Let the multi-clique gossiping occur over the multi-clique coverage $\mathpzc{M}(\mathsf{H}_{_{\mathrm{G}}}^\ast)$. Then the trajectory of error $e(t)$ for multi-clique-gossiping is plotted in Figure \ref{fig:example3}. It can be seen that multi-clique-gossiping yields much faster convergence speed than either clique-gossiping or standard gossiping. To make this conclusion numerically clear, we calculate that the second largest eigenvalue of the state transition matrix for the multi-clique-gossiping $\sqrt[3]{0.8504}$ is less than that for the clique-gossiping $\sqrt[7]{0.8504}$.

\begin{figure}[H]
\centering
\includegraphics[width=4in]{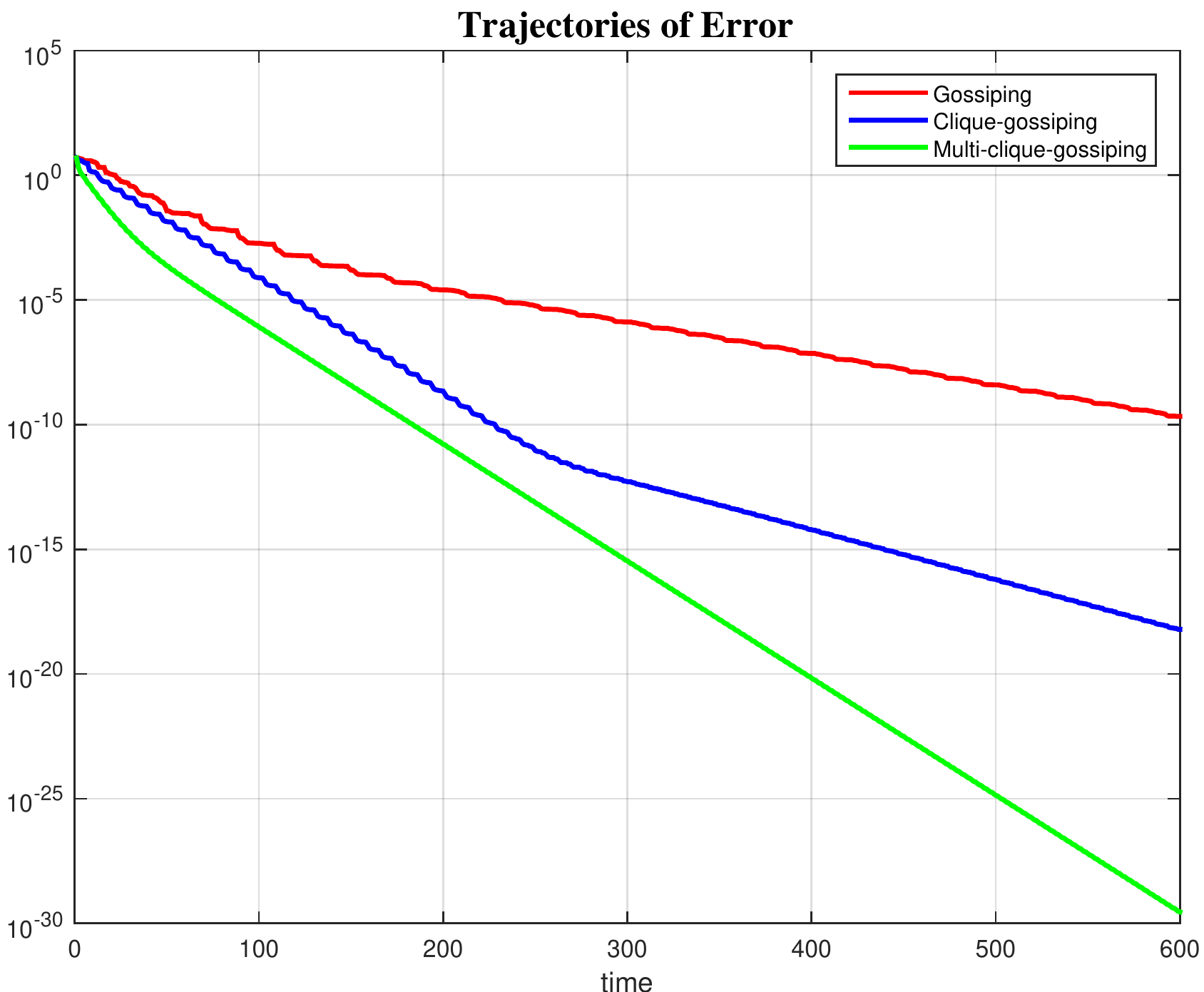}
\caption{The trajectories of error $e(t)=\sum\limits_{i=1}^7 (\xb_i(t)-\bar{\xb})^2$ with $\bar{\xb}=\sum\limits_{i=1}^7 \xb_i(0)/7$ for the periodic multi-clique-gossiping, clique-gossip averaging algorithm and the standard periodic gossip averaging algorithm.}
\label{fig:example3}
\end{figure}

\medskip

Now we study a few typical graphs and provide two examples to illustrate the way that clique-gossiping yields faster convergence speed than standard gossiping.

\noindent{\bf Example 5.} Consider the graphs $\mathrm{G}_m$ with $m=3,\dots,20$ whose topologies are given in Figure \ref{fig:twin_cycle}. It can be seen that $\mathrm{G}_m$ has a typical structure that the induced graphs $\mathrm{G}[\{1,2,\dots,m\}]$ and $\mathrm{G}[\{m+2,m+3,\dots,2m+1\}]$ are both ring graphs, which are linked by a complete induced graph $\mathrm{G}[\{1,m+1,m+2\}]$. Then it follows that $\CC_1=\{1,m+1,m+2\}$ is a $3$-node clique. Let $e_1=(1,m+1),e_2=(1,m+2),e_3=(m+1,m+2)$. Define clique sequence $\mathrm{S}_{\rm leftp}=(1,2),(2,3),\dots,(m,1),\mathrm{S}_{\rm right}=(m+2,m+3),(m+3,m+4),\dots,(2m+1,m+2)$. Let $\mathrm{S}_{\rm left},e_3,e_1,e_2,\mathrm{S}_{\rm  right}$ and $\mathrm{S}_{\rm  left},e_1,e_2,e_3,\mathrm{S}_{\rm  right}$ be two standard gossip averaging sequences with their period-based state transition matrix denoted by $\Fb_{g_1},\Fb_{g_2}$, respectively. Correspondingly, we replace $(1,m+1),(m+1,m+2),(1,m+2)$ with $\CC_1$ to form the clique-gossip averaging sequence, with its period-based state transition matrix denoted by $\Fb_c$. Note that the period length for clique-gossiping is shorter than that for standard gossiping. First we plot the values of $|\lambda_2(\Fb_{g_1})|,|\lambda_2(\Fb_{g_2})|,|\lambda_2(\Fb_c)|$ varying with $m=3,\dots,20$ in Figure \ref{fig:lambda2_twin_cycle}. As can be seen, $|\lambda_2(\Fb_{g_2})|<|\lambda_2(\Fb_{g_1})|=|\lambda_2(\Fb_c)|$ for all $m$. This shows that the application of the clique gossiping does not necessarily  reduce the second largest eigenvalue of the period-based state transition matrix. Based on Proposition \ref{prop:convergence_rate}, we next investigate the relationship among the convergence rate $|\lambda_2(\Fb_{g_1})|^{1/(2m+3)},|\lambda_2(\Fb_{g_2})|^{1/(2m+3)}$ and $|\lambda_2(\Fb_c)|^{1/(2m+1)}$ for all $m=3,\dots,20$ in Figure \ref{fig:lambda2_twin_cycle_nthroot}. The calculated result shows $|\lambda_2(\Fb_c)|^{1/(2m+1)}<|\lambda_2(\Fb_{g_2})|^{1/(2m+3)}<|\lambda_2(\Fb_{g_1})|^{1/(2m+3)}$ for all $m=3,\dots,20$, which indicates that clique-gossiping has faster convergence speed than standard gossiping, especially when the ring graphs $\mathrm{G}[\{1,2,\dots,m\}]$ and $\mathrm{G}[\{m+2,m+3,\dots,2m+1\}]$ have small size.

\begin{figure}[h]
\centering
\includegraphics[width=4.2in]{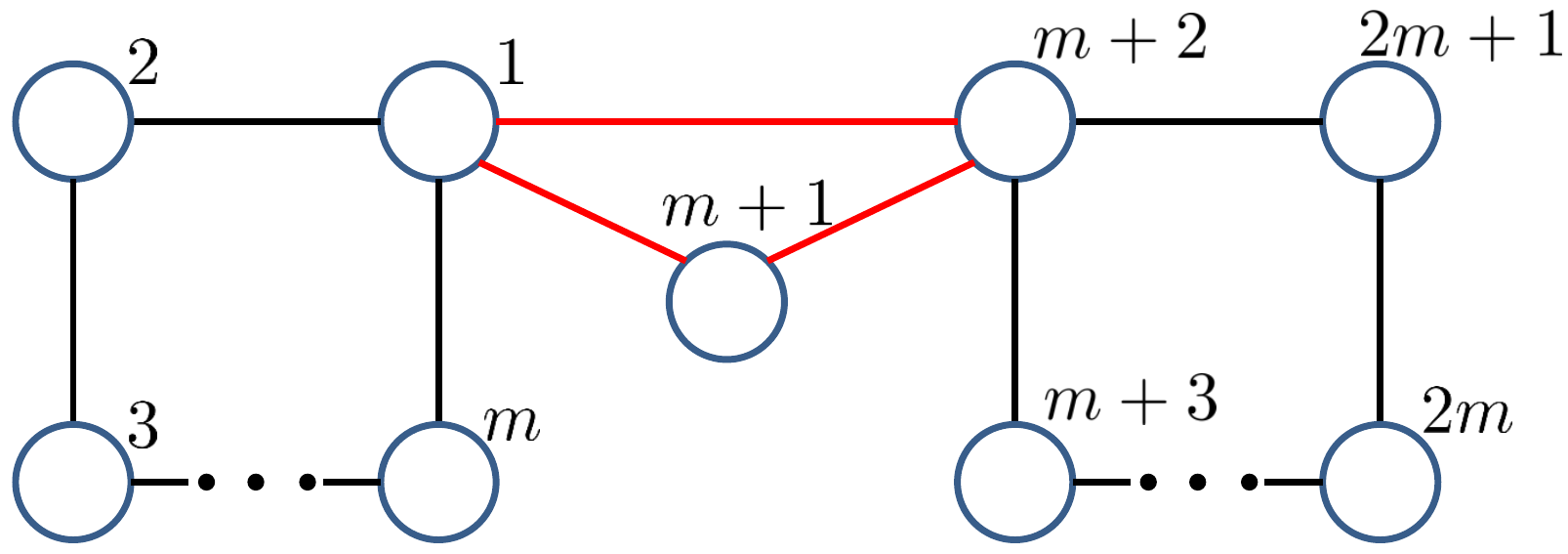}
\caption{Graph $\mathrm{G}_m,m=3,\dots,20$.}
\label{fig:twin_cycle}
\end{figure}

\begin{figure}[h]
\centering
\includegraphics[width=4in]{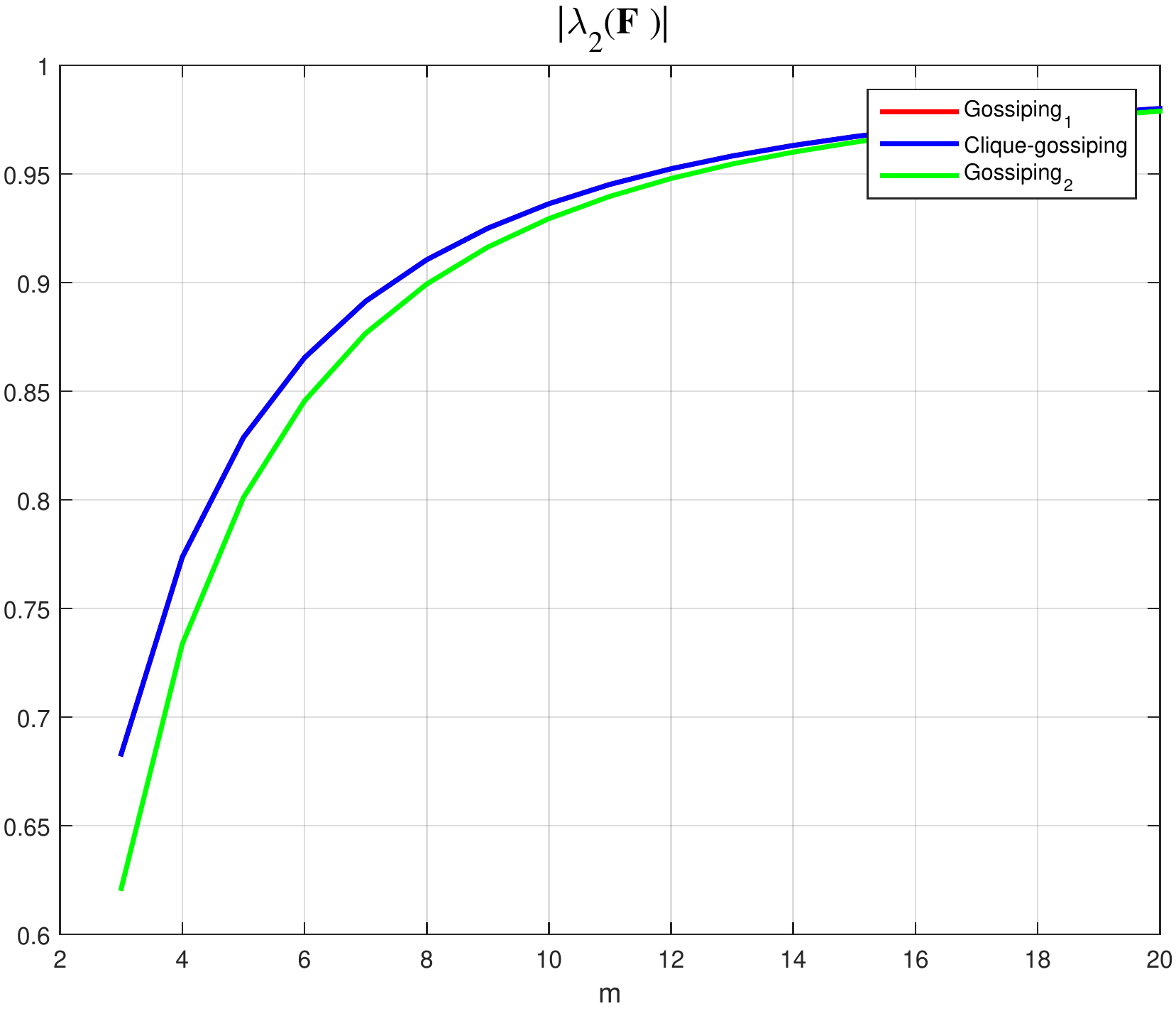}
\caption{The values of $|\lambda_2(\Fb_{g_1})|,|\lambda_2(\Fb_{g_2})|$ for standard gossiping and $|\lambda_2(\Fb_c)|$ for clique-gossiping varying with $m=3,\dots,20$.}
\label{fig:lambda2_twin_cycle}
\end{figure}

\begin{figure}[h]
\centering
\includegraphics[width=4in]{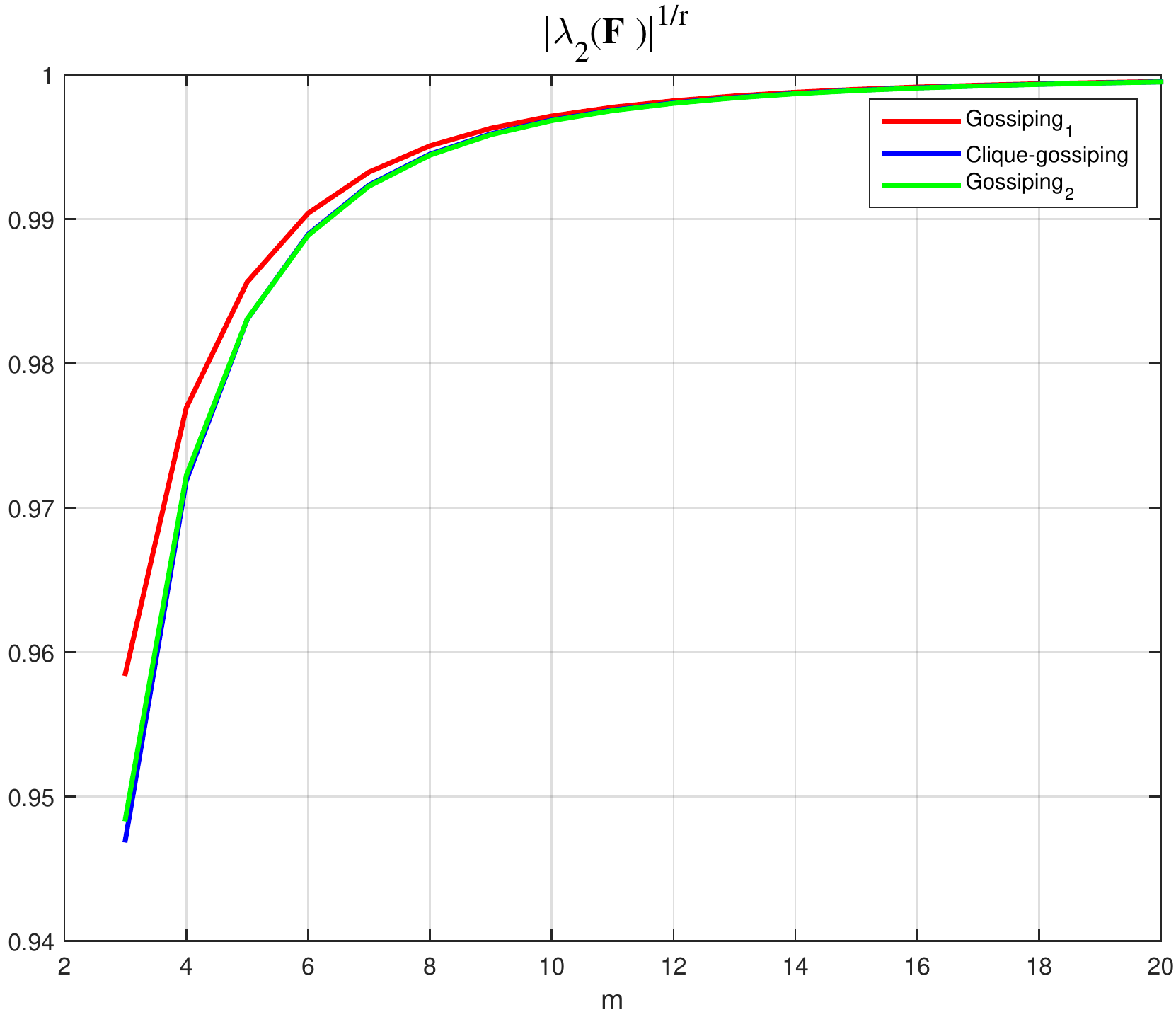}
\caption{The trajectories of $|\lambda_2(\Fb_{g_1})|^{1/(2m+3)},|\lambda_2(\Fb_{g_2})|^{1/(2m+3)}$ for standard gossiping and $|\lambda_2(\Fb_c)|^{1/(2m+1)}$ for clique-gossiping varying with $m=3,\dots,20$.}
\label{fig:lambda2_twin_cycle_nthroot}
\end{figure}

\noindent{\bf Example 6.} Consider the $101$-node graphs $\mathrm{G}_k=(\mathrm{V},\mathrm{E}_k),k=1,\dots,99$ with one such topology shown in Figure \ref{fig:star_cycle}, which satisfy $\mathrm{E}_k=\{(1,2),(1,3),\dots,(1,101),(l+1,l+2)\},l=1,\dots,k$. Note that $\mathrm{G}_k$ has $k$ $3$-node cliques, denoted by $\CC_l=\{1,l+1,l+2\},l=1,\dots,k$. Let all $100+k$ edges of $\mathrm{G}_k$ be a standard gossip averaging sequence in a fixed but arbitrarily chosen order, whose period-based state transition matrix is denoted by $\Fb_g$. By replacing $(l+1,l+2)$ with cliques $\CC_l,l=1,\dots,k,k=1,\dots,99$, we obtain a clique-gossip averaging sequence with its period-based state transition matrix denoted by $\Fb_c$. Evidently, the clique-gossiping and standard gossiping share the same period length. Then we plot $|\lambda_2(\Fb_g)|$ and $|\lambda_2(\Fb_c)|$ for values of $k=1,\dots,99$ in Figure \ref{fig:lambda2_star_cycle}. Since the period lengths for clique-gossiping and standard gossiping are equal, $|\lambda_2(\Fb_g)|$ and $|\lambda_2(\Fb_c)|$ embody their convergence speeds, respectively. We can see that the convergence speed of clique-gossiping is observably faster than standard gossiping. Moreover, the performance improvement becomes greater as the number of $3$-node cliques involved increases.

\begin{figure}[h]
\centering
\includegraphics[width=2.3in]{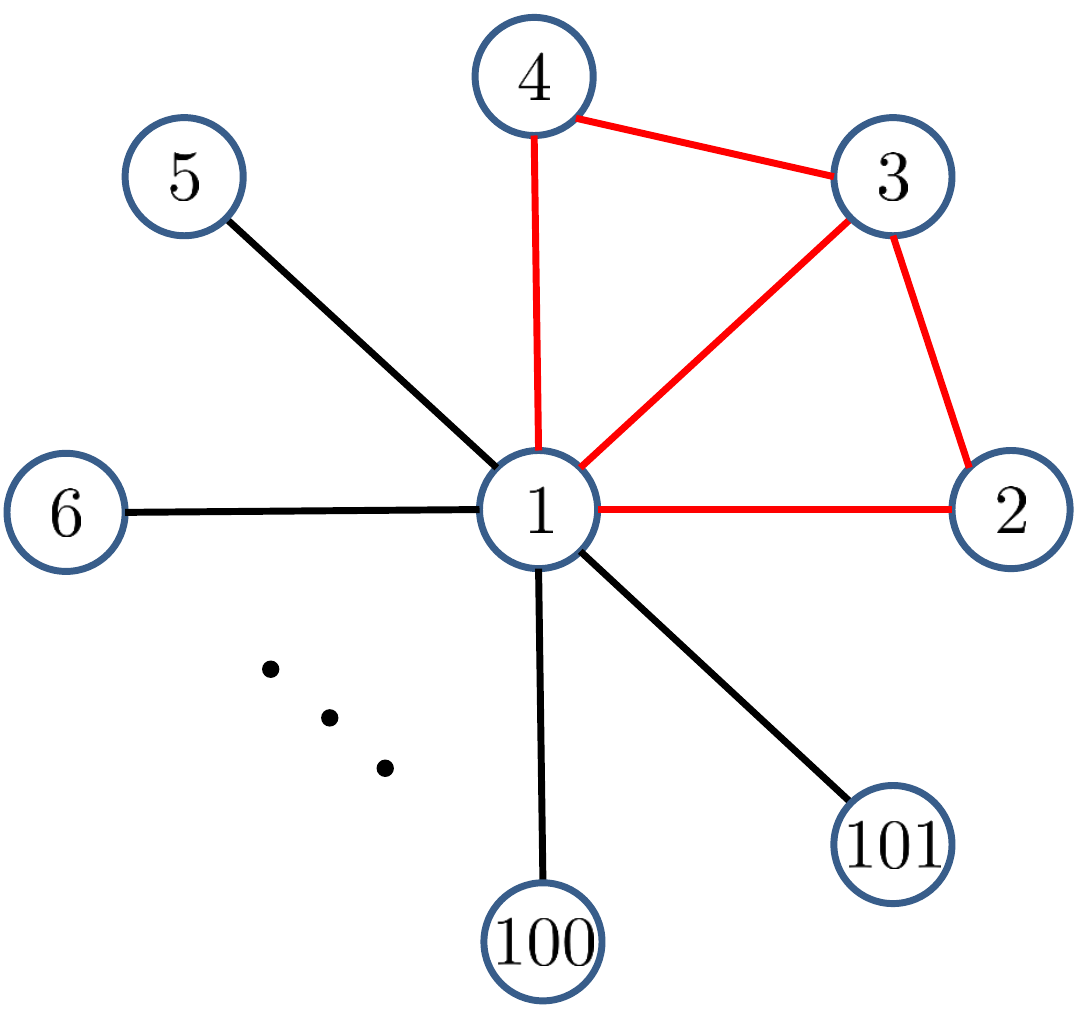}
\caption{A $101$-node graph $\mathrm{G}_k,k=2$.}
\label{fig:star_cycle}
\end{figure}

\begin{figure}[h]
\centering
\includegraphics[width=4in]{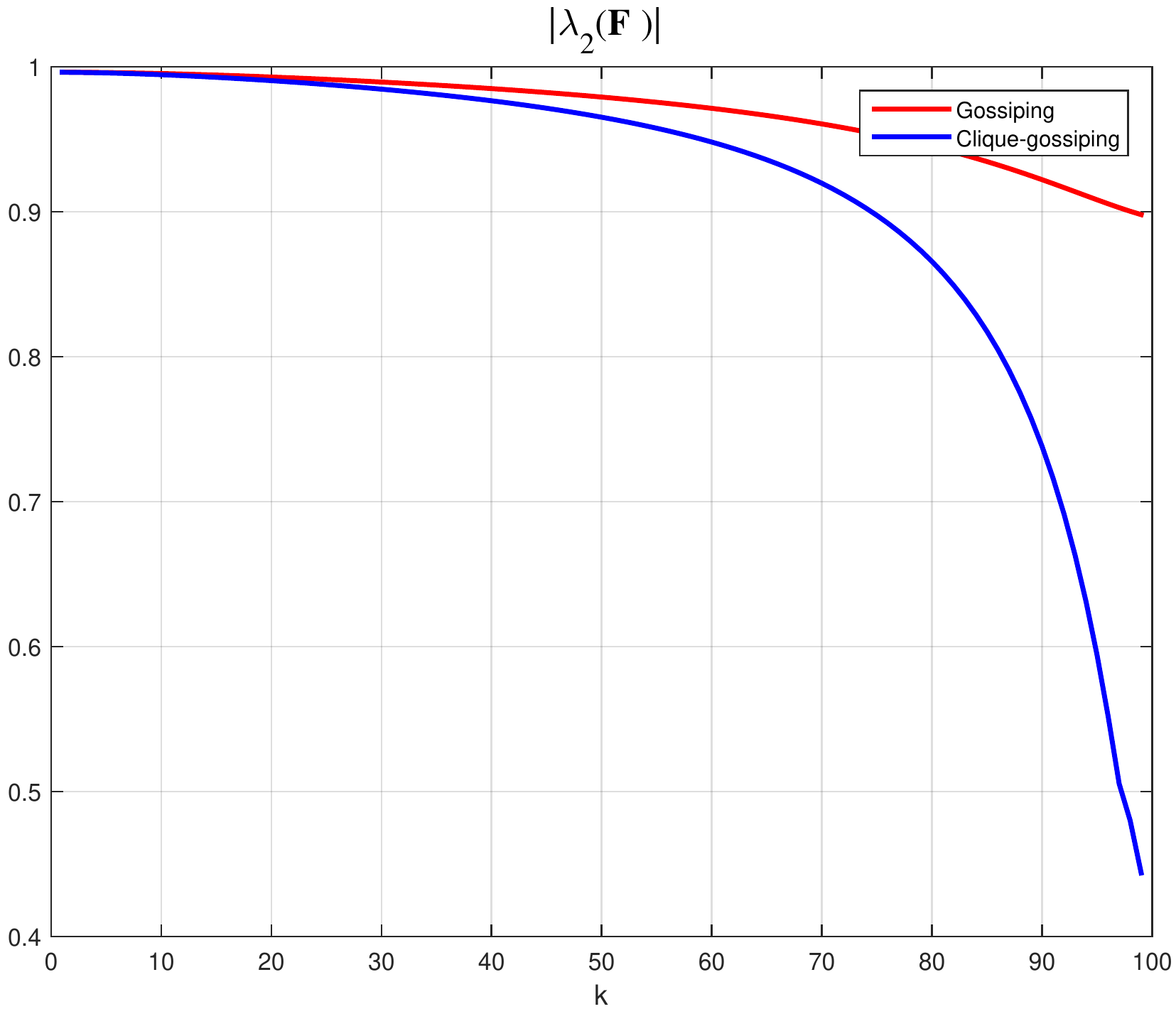}
\caption{The values of $|\lambda_2(\Fb_g)|$ and $|\lambda_2(\Fb_c)|$ for standard gossiping and clique-gossiping varying with $k=1,\dots,99$. Since they have the same period length, we can conclude that clique-gossiping has faster convergence speed than standard gossiping.}
\label{fig:lambda2_star_cycle}
\end{figure}

It is implied from Proposition \ref{prop:convergence_rate} that the convergence speed for periodic standard gossiping or clique-gossiping is determined by two factors: the period length and the second largest eigenvalue magnitude of the period-based state transition matrix. Clique-gossiping may provide faster convergence than standard gossiping by reducing the period length (as verified  in Example 5) or decreasing the the second largest eigenvalue magnitude (as verified  in Example 6).  An intriguing phenomenon observed from these two examples lies in that the performance improvement becomes more pronounced when we replace more edges in gossip sequence with cliques of size greater than two, and reduce the number of the nodes unable to be covered by cliques. Therefore, we conjecture that it is always encouraged to replace a pure gossip with a clique-gossip for the graphs containing cliques, in order to speed up distributed computation and make improvement in the algorithm performance. However, it is difficult to prove that clique-gossiping is more efficient than standard gossiping in a general case, because making comparison among the second largest eigenvalues  in magnitude of different period-based state transition matrices is a difficult problem.

\section{Clique-gossip Algorithms with Finite-time Convergence}\label{sec:finitetime}
In this section, we investigate the clique-gossip averaging algorithm introduced in Definition \ref{def:algorithm}. Formally the algorithm is written as
\begin{equation}\label{eqn:clique-gossiping}
\xb(t+1) = \Mb_{\sigma(t)} \xb(t)
\end{equation}
where $\xb(t)=(\xb_1(t)\dots \xb_n(t))^\top$ and $\Mb_{\sigma(t)}$ is induced by the matrices $\Ab_{ij}(\sigma(t))=1/|\CC_{\sigma(t)}|$. The asymptotic convergence of this algorithm has been clear from Lemma \ref{lem:convergence}.  Interestingly enough for standard gossip algorithms,  finite-time convergence is possible providing a definitive solution within a finite time steps \cite{Guodong2016ton}. Inspired by this we now study the finite-time convergence of clique-gossip algorithms. First we introduce the following definition.
\begin{definition}
 A clique-gossip averaging algorithm achieves finite-time convergence with respect to initial value $\xb(0)=\mathbf{c}\in \mathbb{R}^n$, if there exists a nonnegative integer $T$  (which may depend on $\mathbf{c}$) such that  $
 \xb(T)\in \mathrm{span}\{{\bf 1}\}.
  $
\end{definition}

Naturally, we say a clique-gossiping averaging algorithm achieves global finite-time convergence if finite-time convergence can be reached for any initial value in $\mathbb{R}^n$.
A feasible process of producing global finite-time convergence is provided in the following example.

\noindent{\bf Example 7.} Consider a node set $\mathrm{V}=\{1,2,\dots,12\}$ shown in Figure \ref{fig:finite_conv_nonregular}. Let $\CC_1=\{1,7\},\CC_2=\{2,8\},\CC_3=\{3,9\},\CC_4=\{4,10\},\CC_5=\{5,11\},\CC_6=\{6,12\},\CC_7=\{1,2,\dots,6\},\CC_8=\{7,8,\dots,12\}$. Suppose the node set of graph $\mathrm{G}=\bigcup\limits_{i=1}^8\mathrm{G}[\CC_i]$ is $\mathrm{V}$. Let $\mathsf{H}_{_{\mathrm{G}}}^\ast=\{\CC_1,\CC_2,\dots,\CC_8\}$ be a clique coverage of $\mathrm{G}$. It is evident that by performing averaging operations on first $\CC_1,\CC_2,\dots,\CC_6$ in an arbitrary order, then $\CC_7,\CC_8$ in an arbitrary order (or first $\CC_7,\CC_8$ in an arbitrary order, then $\CC_1,\CC_2,\dots,\CC_6$ in an arbitrary order), global finite-time convergence can be achieved over $\mathrm{G}$.

\begin{figure}[H]
\centering
\includegraphics[width=4.2in]{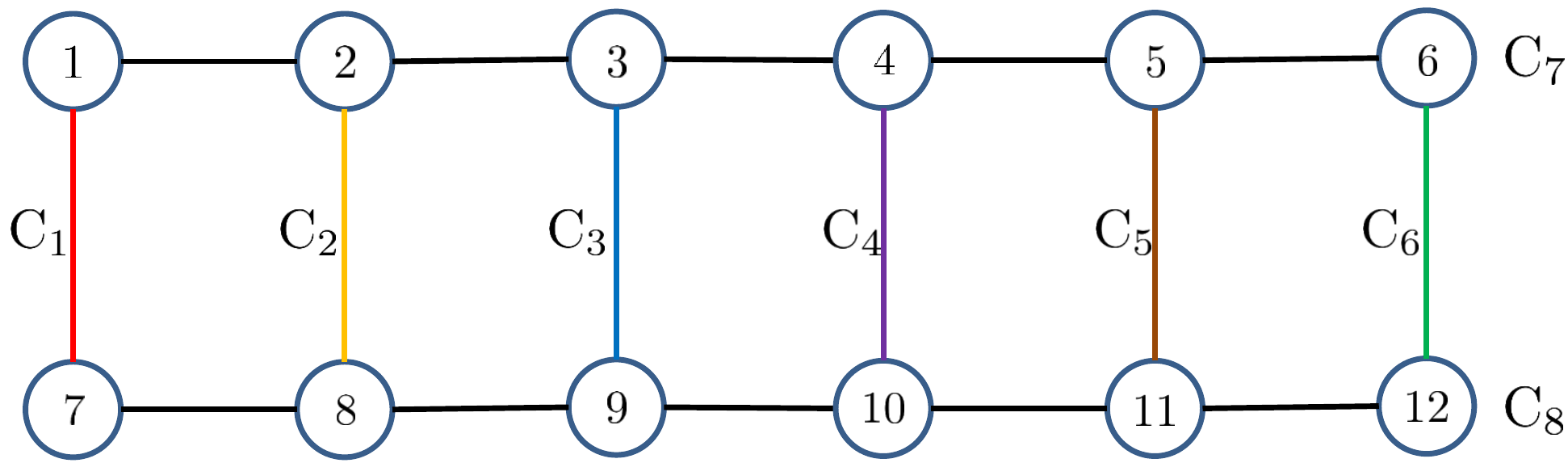}
\caption{A $12$-node complete graph $\mathrm{G}$ (only a subset of the edges are shown) with a clique coverage $\mathsf{H}_{_{\mathrm{G}}}^\ast=\{\CC_1,\CC_2,\dots,\CC_8\}$, where $\CC_1=\{1,7\},\CC_2=\{2,8\},\CC_3=\{3,9\},\CC_4=\{4,10\},\CC_5=\{5,11\},\CC_6=\{6,12\},\CC_7=\{1,2,\dots,6\},\CC_8=\{7,8,\dots,12\}$.}
\label{fig:finite_conv_nonregular}
\end{figure}

As can be seen in Example 7, the number of nodes $n=12=6\times 2$. As a result, global finite-time convergence can be achieved in $2+6=8$ steps by constructing two cliques of size $6$ and six cliques of size $2$. Inspired by Example 7, we present a sufficient condition for finite-time convergence in the following theorem.

\begin{theorem}\label{thm:pq}
Consider a node set $\mathrm{V}=\{1,2,\dots,n\}$ with $n=r_1r_2$ for two integers $r_1,r_2\ge 2$. Then there exists a graph $\mathrm{G}$ with its node set being $\mathrm{V}$ and a clique coverage $\mathsf{H}_{_{\mathrm{G}}}^\ast$ that consists of only cliques with sizes $r_1$ or $r_2$ leading to a globally finite-time convergent clique-gossip averaging algorithm. Furthermore, such finite-time convergence can be achieved in $r_1+r_2$ steps.
\end{theorem}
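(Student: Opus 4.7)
The plan is to generalize the construction from Example 7 by arranging the $n=r_1 r_2$ nodes into an $r_1\times r_2$ grid. Specifically, I would relabel $\mathrm{V}$ as pairs $(i,j)$ with $i\in\{1,\dots,r_1\}$ and $j\in\{1,\dots,r_2\}$, then take as cliques the $r_1$ ``rows'' $\mathrm{R}_i=\{(i,1),(i,2),\dots,(i,r_2)\}$, each of size $r_2$, and the $r_2$ ``columns'' $\mathrm{K}_j=\{(1,j),(2,j),\dots,(r_1,j)\}$, each of size $r_1$. Let $\mathrm{G}$ be the union of the complete graphs on these $r_1+r_2$ node subsets, and let $\mathsf{H}_{_{\mathrm{G}}}^\ast=\{\mathrm{R}_1,\dots,\mathrm{R}_{r_1},\mathrm{K}_1,\dots,\mathrm{K}_{r_2}\}$. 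This is clearly a clique coverage (every node lies in a row and in a column, and the union graph is connected through the crossings), and every clique has size $r_1$ or $r_2$ as required.

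Next I would describe the clique-gossip schedule and verify finite-time convergence. The schedule is: first perform an averaging step on each of the $r_1$ row cliques in an arbitrary order (using $r_1$ time steps), then perform an averaging step on each of the $r_2$ column cliques in an arbitrary order (using $r_2$ time steps). Since distinct rows are pairwise disjoint, and distinct columns are pairwise disjoint, the matrices $\Mb_{\mathrm{R}_i}$ and $\Mb_{\mathrm{R}_{i'}}$ commute for $i\neq i'$, and similarly for columns, so the ordering within each phase does not matter.

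The key computation is then immediate. After phase one, for every pair $(i,j)$ the state at node $(i,j)$ equals the row average $\bar{x}_i:=\tfrac{1}{r_2}\sum_{j'=1}^{r_2}x_{(i,j')}(0)$. In phase two, when the column clique $\mathrm{K}_j$ gossips, every node in that column takes on the value $\tfrac{1}{r_1}\sum_{i=1}^{r_1}\bar{x}_i=\tfrac{1}{r_1 r_2}\sum_{i,j'}x_{(i,j')}(0)$, which is precisely the global average. After all $r_2$ column cliques have gossiped, every node holds the global average, so $\xb(r_1+r_2)\in\mathrm{span}\{\mathbf{1}\}$ for every initial condition. This proves global finite-time convergence in exactly $r_1+r_2$ steps.

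There is no genuine obstacle here: the only subtlety is ensuring that the cliques in $\mathsf{H}_{_{\mathrm{G}}}^\ast$ are genuinely cliques of the constructed graph $\mathrm{G}$, which follows by definition since I defined $\mathrm{G}$ to be the union of the complete graphs on the rows and columns. I would close the proof by noting that the constructed $\mathrm{G}$ indeed realizes the claim with a clique coverage using only cliques of sizes $r_1$ and $r_2$, establishing both the existence statement and the $r_1+r_2$ step upper bound.
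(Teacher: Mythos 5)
Your proposal is correct and is essentially the paper's own proof: the paper defines $\CC_p=\{r_1(p-1)+1,\dots,r_1p\}$ and $\mathrm{Q}_q=\{q,r_1+q,\dots,r_1(r_2-1)+q\}$, which are exactly your columns and rows under the bijection $(i,j)\mapsto r_1(j-1)+i$, and runs the same two-phase disjoint-then-transversal averaging schedule. The verification that each second-phase clique meets each first-phase clique in exactly one node, yielding the global average in $r_1+r_2$ steps, matches the paper's argument.
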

\begin{proof}
Define cliques $\CC_p=\{r_1(p-1)+1,r_1(p-1)+2,\dots,r_1p\},p=1,\dots,r_2$ and $\mathrm{Q}_q=\{q,r_1+q,\dots,r_1(r_2-1)+q\},q=1,\dots,r_1$. Let $\mathrm{G}=(\bigcup\limits_{i=1}^{r_2}\mathrm{G}[\CC_i]) \mcup (\bigcup\limits_{i=1}^{r_1}\mathrm{G}[\mathrm{Q}_i])$. Next we prove that along the $r_1+r_2$ long sequence of cliques $\CC_1,\dots,\CC_{r_2},\mathrm{Q}_1,\dots,\mathrm{Q}_{r_1}$, the algorithm yields a global finite-time convergence. Note that the cliques $\CC_p,p=1,\dots,r_2$ (or $\mathrm{Q}_q,q=1,\dots,r_1$) are mutually disjoint.
Suppose every node $i$ holds the initial state $\xb_i(0)$. After undertaking averaging operations over $\CC_p$s, we have the node $i\in\CC_p$'s state at time $t=r_2$
\begin{equation}\label{eq:pq1}
\xb_i(r_2)=\frac{1}{r_1}\sum\limits_{j\in\CC_p}\xb_j(0).
\end{equation}
Then we perform averaging operations over $\mathrm{Q}_q$s and one has for node $i\in\mathrm{Q}_q$
\begin{equation}\label{eq:pq2}
\xb_i(r_1+r_2)=\frac{1}{r_2}\sum\limits_{j\in\mathrm{Q}_q}\xb_j(r_2).
\end{equation}
It is worth noting that every node $j$, contained in the same $\mathrm{Q}_q$, belongs to a distinct $\CC_p$. Thus by (\ref{eq:pq1}) and (\ref{eq:pq2}), and the fact that $\CC_p$s are mutually disjoint
\begin{equation}\notag
\xb_i(r_1+r_2)=\frac{1}{r_1r_2}\sum\limits_{j=1}^n\xb(0),\ \forall i=1,\dots,n.
\end{equation}
This completes the proof.
\end{proof}

\begin{remark} Let us consider the case where  $n=r_1r_2\dots r_k$ with integers $r_1,r_2,\dots,r_k\ge 2$. By recursively applying Theorem \ref{thm:pq},  a   clique sequence with the cliques' sizes being $r_1,r_2,\dots,r_k$ can be constructed along which finite-time convergent averaging algorithm is defined with convergence achieved in    $$
\sum\limits_{i=1}^k \prod\limits_{j=1,j\neq i}^k r_j$$
steps. The intuition is that one can embed the $n$ nodes into a $k$-dimensional lattice  with the $j$'th dimension containing $r_j$ nodes. Then finite-time convergence can be built along each dimension. In particular, when $n=2^k$, the clique coverage $\mathsf{H}_{_{\mathrm{G}}}^\ast$ with all cliques being gossip edges can be found to produce finite-time convergence, as is known from \cite{Guodong2016ton}.

\end{remark}

Theorem \ref{thm:pq} provides the method of constructing a clique sequence for finite-time convergence, on condition that the total number of a graph's nodes is the product of two integers greater than one, which are exactly the size of the cliques to be constructed. In practical engineering problems, however, the number of the nodes contained in each selected clique is required to be unchanged, for the convenience of synchronization, noise computation, delay elimination, etc. In order to analyze the finite-time convergence in this background, we first provide the following definition.


\begin{definition}
A clique coverage  $\mathsf{H}_{_{\mathrm{G}}}^\ast$ for a graph $\mathrm{G}$ is  {\em $m$-regular} if every clique in $\mathsf{H}_{_{\mathrm{G}}}^\ast$ possesses exactly  $m$   nodes. The resulting clique-gossip averaging algorithm is called an $m$-regular clique-gossip averaging algorithm.
\end{definition}

It is obvious that not all connected graphs have an $m$-regular clique coverage if $m\geq 3$. For a complete graph with $n$ nodes, there always exists an $m$-regular clique coverage of the graph for any $m\leq n$. Now we are interested in the \emph{finite-time} convergence of  $m$-regular clique-gossip averaging algorithms. We present the following theorem.

\begin{theorem}\label{thm:finite-converge}
Let $\mathrm{V}=\{1,2,\dots,n\}$. There exists a graph $\mathrm{G}$ with its node set being $\mathrm{V}$ such that one can find an $m$-regular clique coverage $\mathsf{H}_{_{\mathrm{G}}}^\ast$ which can lead to a globally  finite-time convergent  clique-gossip averaging algorithm if and only if $n$ is divisible by $m$ with the same prime factors as $m$. More precisely, the following statements hold.
\begin{itemize}
\item[(i)] If $n$ is not a multiple of $m$,  or $n$ contains a different prime factor compared to $m$, then   no     $m$-regular clique-gossip averaging algorithm  converges  globally in finite time. In fact, in that case any given   $m$-regular clique-gossip averaging algorithm fails to converge in finite time for almost all initial values.

    \item[(ii)] Suppose there exist factorizations $m=p_1^{r_1}\cdots p_d^{r_d}$ and $n=p_1^{s_1}\cdots p_d^{s_d}$ with $p_1,\dots,p_d$ being prime numbers and $s_i\geq r_i>0$ for all $1\leq i\leq d$. Then there exists a globally convergent $m$-regular  clique-gossip averaging algorithm. Moreover, a fastest  $m$-regular  clique-gossip averaging algorithm converges in
        $$
       n \Big(\max_{1\leq i\leq d}\left\lceil \frac{s_i}{r_i}\right\rceil\Big)/m
        $$
        steps.
\end{itemize}
\end{theorem}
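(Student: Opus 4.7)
The plan is to prove the two directions of the iff separately, writing $\Mb^{(T)} := \Mb_{\sigma(T-1)} \cdots \Mb_{\sigma(0)}$, so that global finite-time convergence of the algorithm is equivalent to the matrix identity $\Mb^{(T)} = \frac{1}{n}\1\1^\top$.

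For necessity (part (i)), two obstructions are to be established. First, I observe that every entry of each $\Mb_{\sigma(t)}$ lies in $\{0, 1/m, 1\}$, so every entry of $\Mb^{(T)}$ is a rational with denominator dividing $m^T$; setting this equal to $1/n$ forces $n \mid m^T$, which fails whenever $n$ has a prime factor not dividing $m$. A short linear-algebra argument (the set of initial values that would converge in finite time is then a proper subspace) yields the ``almost all initial values'' statement. Second, to force $m \mid n$ I plan a backward ``frozen-row'' analysis: setting $U_t := \mathrm{V} \setminus \mcup_{s=T-t}^{T-1} \CC_{\sigma(s)}$, the fact that $\Mb_{\sigma(s)}$ leaves rows indexed by $\mathrm{V}\setminus \CC_{\sigma(s)}$ unchanged gives by induction that the $i$-th row of $\Mb^{(T-t)}$ equals $\frac{1}{n}\1^\top$ for each $i \in U_t$. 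I would then push this further: any clique meeting the frozen region forces the average of its rows one step earlier also to be $\frac{1}{n}\1^\top$, and tracing these compulsions backward shows that the schedule must contain a consecutive block of $n/m$ pairwise disjoint $m$-cliques partitioning $\mathrm{V}$, which exists only if $m \mid n$.

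For sufficiency (part (ii)), my upper-bound construction takes $\mathrm{G}$ to be the complete graph on $\mathrm{V}$ and identifies $\mathrm{V}$ with the additive group $G = \prod_{i=1}^d \mathbb{F}_{p_i}^{s_i}$ of order $n$. Set $\ell := \max_i \lceil s_i/r_i\rceil$, and for each $t \in \{1,\ldots,\ell\}$ and each $i$ choose an $r_i$-dimensional subspace $W_{t,i} \leq \mathbb{F}_{p_i}^{s_i}$ with $\sum_{t=1}^{\ell} W_{t,i} = \mathbb{F}_{p_i}^{s_i}$, which is feasible exactly because $\ell r_i \geq s_i$. The product subgroup $K_t := \prod_i W_{t,i}$ has order $m$, and in pass $t$ I use the $n/m$ cosets of $K_t$ as cliques. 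A Fubini-style iterated-averaging identity shows that after these $\ell$ passes the value at every $g \in G$ equals the average of the initial data over $g + K_1 + \cdots + K_\ell = G$, which is $\bar{x}$; the algorithm therefore terminates in $\ell \cdot n/m$ steps.

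For the matching lower bound, I would combine the row-freezing analysis from part (i) (whenever $U_t \neq \emptyset$, the denominator constraint demands $T - t \geq \ell$) with a stronger pass-counting argument obtained by iterating that analysis: each of the $\ell$ mixing stages required by the $p$-adic structure of $1/n$ in all prime directions demands a fresh partition pass of $n/m$ disjoint cliques. This yields $T \geq \ell \cdot n/m$, matching the construction. The main technical obstacle is upgrading the raw backward observation (the last $\lceil n/m \rceil$ cliques cover $\mathrm{V}$) to the partition statement (they are pairwise disjoint, which requires $m \mid n$), and similarly formalizing the pass-counting to extract the sharp factor $\ell$ in the lower bound.
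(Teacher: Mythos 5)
Your sufficiency construction is correct and takes a genuinely different (and arguably cleaner) route than the paper's: the paper uses a recursive procedure $\mathbf{CliqueSelect}$ that first partitions $\mathrm{V}$ into $n/m$ blocks of size $m$ and then recurses on $n_2$ interleaved sub-problems of size $m_1n_1$, whereas you run $\ell=\max_i\lceil s_i/r_i\rceil$ flat passes over the cosets of subgroups $K_t\leq\prod_i\mathbb{F}_{p_i}^{s_i}$ with $K_1+\cdots+K_\ell$ equal to the whole group. Your coset-counting identity does show the final value at $g$ is the average of the initial data over $g+K_1+\cdots+K_\ell$, and the step count $\ell n/m$ matches the paper's. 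Likewise your first necessity obstruction (a prime of $n$ not dividing $m$ makes $1/n$ unreachable since all denominators divide $m^T$) and the almost-everywhere statement (countable union of proper subspaces) coincide with the paper's arguments.

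The genuine gap is in your second necessity obstruction and, consequently, in your lower bound. The frozen-row analysis only yields that rows of $\Mb^{(T-t)}$ indexed by $U_t$ already equal $\frac{1}{n}\1^\top$; it does not force the last $\lceil n/m\rceil$ cliques to cover $\mathrm{V}$ (one may append redundant cliques after convergence, since $\Mb_{\sigma(t)}\1=\1$ preserves $\frac{1}{n}\1\1^\top$), let alone to be pairwise disjoint, and you give no argument that the schedule must contain a consecutive partitioning block --- you flag this yourself as the ``main technical obstacle,'' but it is exactly the content of the claim. The paper avoids any such structural statement and argues arithmetically: with initial value $(1,0,\dots,0)^\top$ the node states are reduced fractions $\alpha_i(t)/\beta_i(t)$ whose denominators, once node $i$ has been mixed, carry the factor $p_a^{r_a}$ (each averaging divides by $m$), while the target $1/n$ has $p_a$-valuation $s_a<r_a$, so it is never attained in finitely many steps. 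Similarly, for the sharp lower bound the paper does not count ``partition passes'' but counts participations: with $h_i(t)$ the number of the first $t$ cliques containing node $i$, the $p_\tau$-adic valuation of the denominator of $[\mathbf{N}(t)]_{ii}$ is at most $h_i(t)r_\tau$, so reaching $1/n$ forces $h_i(T)\geq\bm{\delta}(n,m)$ for every $i$, and the double count $\sum_i h_i(T)=Tm$ gives $T\geq\bm{\delta}(n,m)n/m$ with no disjointness or pass structure required. You should replace your structural ``consecutive partition block'' step by this valuation-and-counting argument; as it stands, that part of your proof does not go through.
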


Theorem \ref{thm:finite-converge} is a generalization of the results on finite-time convergence with a standard gossip averaging algorithm~\cite{Guodong2016ton}, which corresponds to the special case of $m=2$. The sufficiency proof of the theorem is based on a constructive algorithm, where clearly only a small fraction of edges in the complete graph has been used. Therefore, the usefulness of this finite-time convergent result is not restricted only to the complete graph case. Finite-time convergence is also  possible if we  allow the $\Ab_{ij}(\sigma(t))$ to be genuinely time-dependent, e.g., \cite{Julien2015tac}, which will result in a consensus algorithm with time-varying state transitions. Now we provide an example to illustrate the finite-time convergence in Theorem \ref{thm:finite-converge}.

\noindent{\bf Example 8.} Consider the $18$-node complete graph $\mathrm{G}$ in Figure \ref{fig:finite_conv} (only a subset of the edges are shown). Let $\CC_1=\{1,2,3,4,5,6\},\CC_2=\{7,8,9,10,11,12\},\CC_3=\{13,14,15,16,17,18\},\CC_4=\{1,2,7,8,13,14\},\CC_5=\{3,4,9,10,15,16\},\CC_6=\{5,6,11,12,17,18\}$ and $\mathsf{H}_{_{\mathrm{G}}}^\ast=\{\CC_1,\CC_2,\dots,\CC_6\}$ be a $6$-regular clique coverage of $\mathrm{G}$. Also we plot its generalized line graph $\mathcal{L}(\mathsf{H}_{_{\mathrm{G}}}^\ast)$ in Figure \ref{fig:finite_conv_line}. Note that $\mathcal{L}(\mathsf{H}_{_{\mathrm{G}}}^\ast)$ is a complete bipartite graph. It can be seen that $\mathrm{G}$ with the clique coverage $\mathsf{H}_{_{\mathrm{G}}}^\ast$ satisfies the finite-time convergence condition in Theorem \ref{thm:finite-converge} and one can indentify that $n=18=2\times 3^2,m=2\times 3$. By undertaking the averaging operations on first $\CC_1,\CC_2,\CC_3$ in an arbitrary order, then $\CC_4,\CC_5,\CC_6$ in an arbitrary order (or first $\CC_4,\CC_5,\CC_6$ in an arbitrary order, then $\CC_1,\CC_2,\CC_3$ in an arbitrary order), the clique-gossip averaging algorithm yields finite-time convergence regardless of initial states.

\begin{figure}[H]
\centering
\includegraphics[width=4.2in]{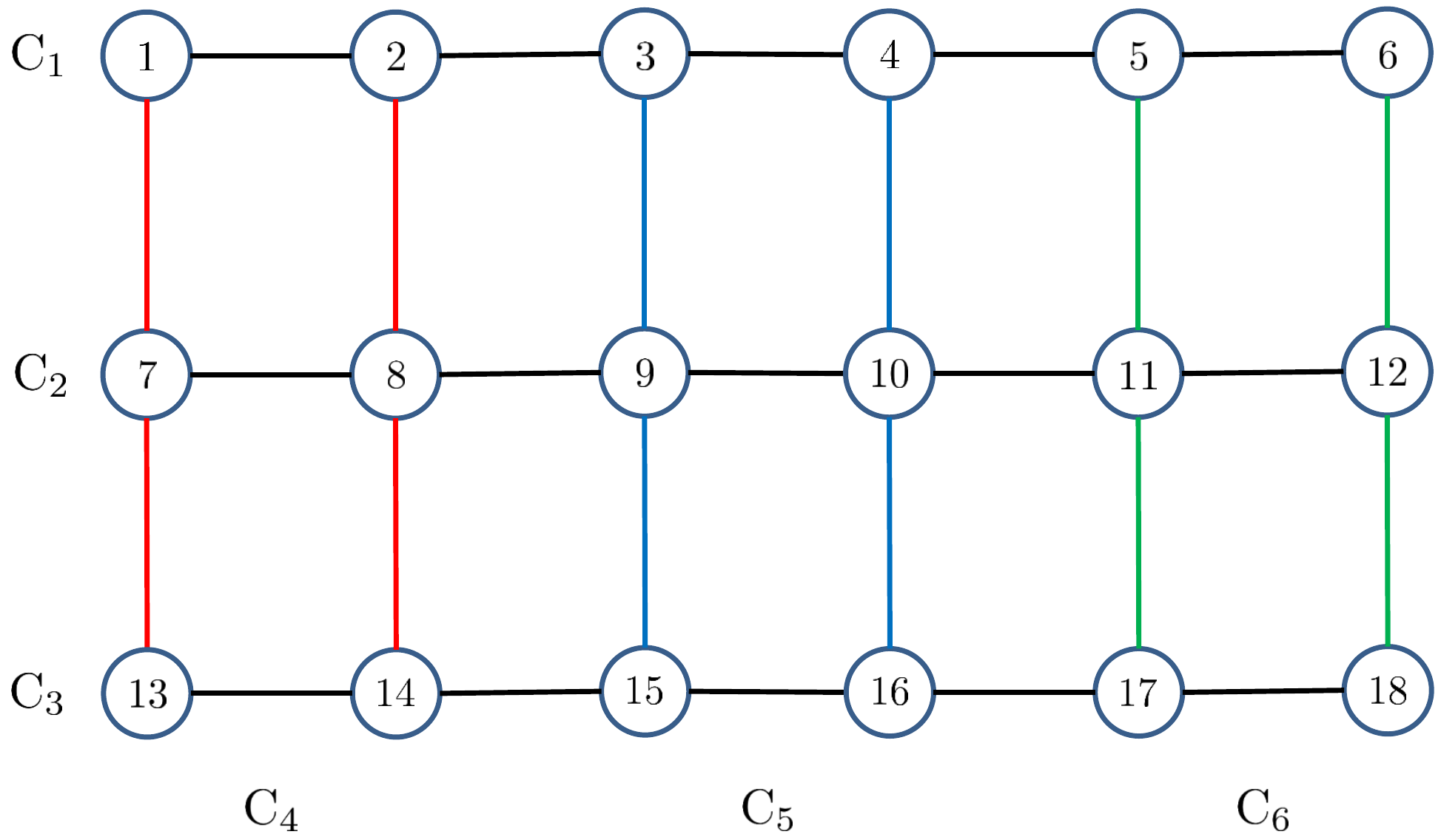}
\caption{A $18$-node complete graph $\mathrm{G}$(only a subset of the edges are shown) with a $6$-regular clique coverage $\mathsf{H}_{_{\mathrm{G}}}^\ast=\{\CC_1,\CC_2,\dots,\CC_6\}$, where $\CC_1=\{1,2,3,4,5,6\},\CC_2=\{7,8,9,10,11,12\},\CC_3=\{13,14,15,16,17,18\},\CC_4=\{1,2,7,8,13,14\},\CC_5=\{3,4,9,10,15,16\},\CC_6=\{5,6,11,12,17,18\}$.}
\label{fig:finite_conv}
\end{figure}

\begin{figure}[H]
\centering
\includegraphics[width=2.5in]{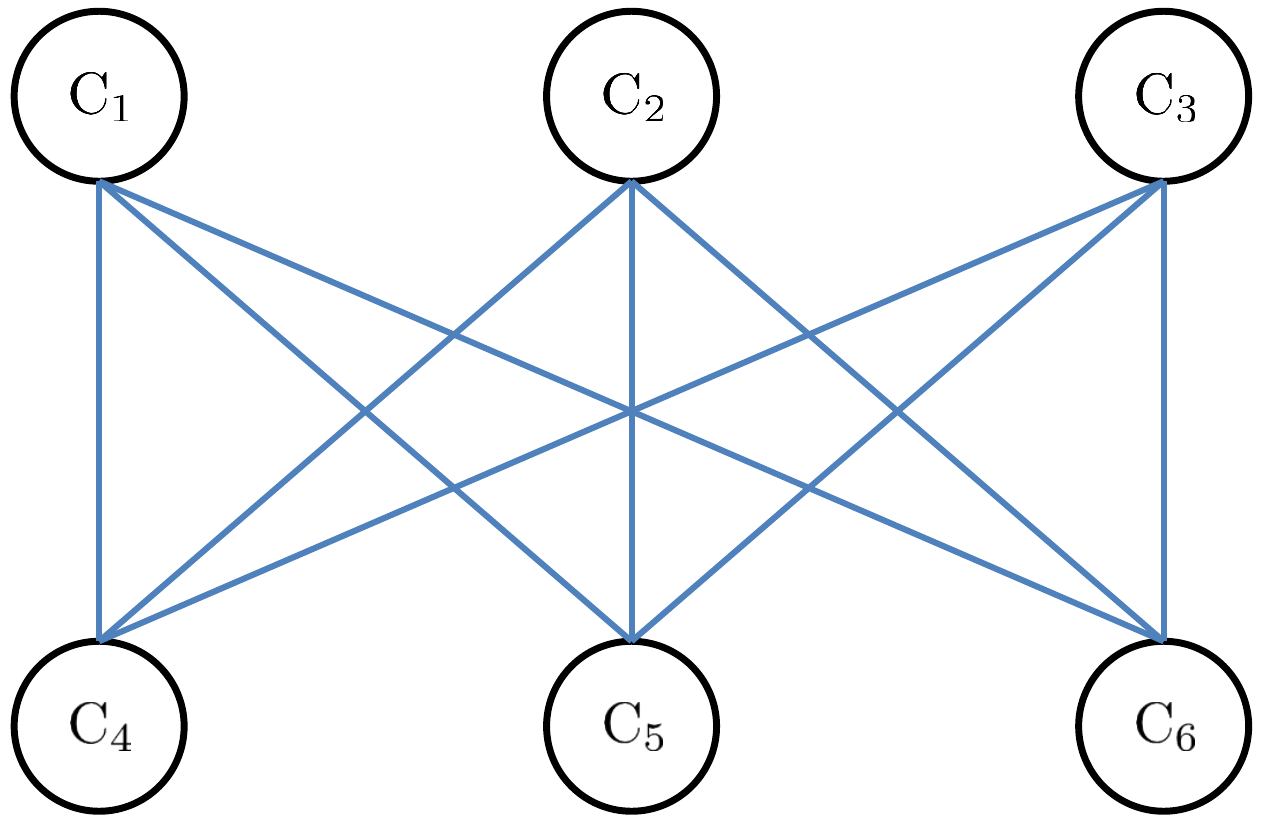}
\caption{The generalized line graph $\mathcal{L}(\mathsf{H}_{_{\mathrm{G}}}^\ast)$ for $\mathrm{G}$ in Figure \ref{fig:finite_conv}. It can be seen that $\mathcal{L}(\mathsf{H}_{_{\mathrm{G}}}^\ast)$ is a complete bipartite graph.}
\label{fig:finite_conv_line}
\end{figure}

\subsection{Proof of Sufficiency for Theorem \ref{thm:finite-converge}}
In this section, we prove that if $n$ is divisible by $m$ with the same prime factors as $m$, then there exists a graph $\mathrm{G}$ with its node set being $\mathrm{V}$ such that one can find an $m$-regular clique coverage $\mathsf{H}_{_{\mathrm{G}}}^\ast$ which can lead to a globally  finite-time convergent  clique-gossip averaging algorithm.

Let $m=p_1^{r_1}\cdots p_d^{r_d}$ and   $n=p_1^{s_1}\cdots p_d^{s_d}$ with $s_i\geq r_i>0$. We introduce $$
\bm{\delta}(n,m):=\Big(\max\limits_{1\leq i\leq d}\left\lceil \frac{s_i}{r_i}\right\rceil\Big).
$$
We denote by $(\mathrm{Q}_1\dots \mathrm{Q}_{l_1})$ a finite sequence of cliques of length $l_1$, where $\mathrm{Q}_i\in \mathsf{H}_{_{\mathrm{G}}}^\ast, 1\leq i \leq l_1$. Let $(\mathrm{Q}'_1\dots \mathrm{Q}'_{l_2})$ be another finite sequence of cliques of length $l_2$. We define the concatenation of  $(\mathrm{Q}_1\dots \mathrm{Q}_{l_1})$ and $(\mathrm{Q}'_1\dots \mathrm{Q}'_{l_2})$ as
\[
(\mathrm{Q}_1\dots \mathrm{Q}_{l_1}) \circ (\mathrm{Q}'_1\dots \mathrm{Q}'_{l_2}) = (\mathrm{Q}_1\dots \mathrm{Q}_{l_1}\mathrm{Q}'_1\dots \mathrm{Q}'_{l_2}),
\]
which is a finite sequence of length $l_1+l_2$. We now present a recursive  algorithm as a clique selection process over the  complete graph $\mathrm{G}=(\mathrm{V},\mathrm{E})$, the output of which is a finite sequence of cliques with $m$ nodes.

\begin{algorithm}[htb]
{$\mathbf{CliqueSelect}(\mathrm{V}, m)$}
\begin{algorithmic}[1]
\STATE  Let $n_1=\frac{n}{m}$.
\STATE If $n_1=1$, {\bf return} $\mathrm{V}$.
\STATE Otherwise, let $\mathrm{Q}_i=\big\{{m(i-1)+1},\dots, {mi}\big\}$ for $i=1,\dots,n_1$.
\STATE Let $m_1=\min\{b\in\mathbb{N}: m\mid n_1b\}$. Denote $n_2=\frac{m}{m_1}$.
\STATE Let $\mathrm{Q}^*_{ij}= \big\{{m(i-1)+m_1(j-1)+1},\dots, {m(i-1)+m_1j} \big\}$ for $j=1,\dots, n_2, i=1,\dots,n_1.$
\STATE Let $\mathrm{Q}^*_j =   \bigcup\limits_{i=1}^{n_1} \mathrm{Q}^*_{ij}, j=1,\dots,n_2.$
\RETURN $(\mathrm{Q}_1\dots \mathrm{Q}_{n_1})\circ \mathbf{CliqueSelect}(\mathrm{Q}^*_1, m)\circ\cdots\circ \mathbf{CliqueSelect}(\mathrm{Q}^*_{n_2},m)$;
\end{algorithmic}
\end{algorithm}

We first show this algorithm is well defined. Note that the following mathematical notations are all defined in the algorithm above. From the expressions $m=p_1^{r_1}\cdots p_d^{r_d}$ and $n=p_1^{s_1}\cdots p_d^{s_d}$ we know
$n_1 = p_1^{s_1-r_1}\cdots p_d^{s_d-r_d}$. By the definition of $m_1$, there holds  $m_1 =  p_1^{r'_1}\cdots p_d^{r'_d}$, where
\begin{align*}
&r'_1 = \max\{0, r_1-(s_1-r_1)\}\leq r_1,\\
&\dots \dots\\
&r'_d = \max\{0, r_d-(s_d-r_d)\}\leq r_d.
 \end{align*}
 Therefore, $m_1\mid m$, which implies that $n_2$ is a well defined  integer. We further know that each $\mathrm{Q}^*_{ij}$ contains $m_1$ nodes, and each $\mathrm{Q}^*_j$ contains $m_1n_1$ nodes. The definition of $m_1$ ensures $m\mid m_1n_1$ with $m_1n_1$ containing no distinct  prime factor compared to $m$. That is to say,
  $$\mathbf{CliqueSelect}(\mathrm{Q}^*_{1},m),\ \dots,\ \mathbf{CliqueSelect}(\mathrm{Q}^*_{n_2},m)
  $$ can be reasonably recursively invoked.

Next,  we prove by an induction argument  that the  clique sequence   produced by $
\mathbf{CliqueSelect}(\mathrm{V}, m)
$ is of length $ {\bm{\delta}(n,m)n}/{m}$, and the resulting  clique-gossip algorithm  converges in $ {\bm{\delta}(n,m)n}/{m}$ time steps. We complete the remainder of the proof in three steps.

\noindent {Step 1}. There holds $n=m$ if  $\bm{\delta}(n,m)=1$. The $\mathbf{CliqueSelect}(\mathrm{V}, m)$ returns one clique $\mathrm{V}$, and obviously the resulting clique-gossip algorithm converge in one step.  Now   we assume

\noindent{\em Induction Hypothesis}. For  $\bm{\delta}(n,m)\leq K-1$ with $K>1$,  $
\mathbf{CliqueSelect}(\mathrm{V}, m)
$ generates a sequence of $ {\bm{\delta}(n,m)n}/{m}$ cliques, along which the resulting  clique-gossip algorithm  converges in $ {\bm{\delta}(n,m)n}/{m}$ time steps.

\noindent {Step 2}.  Let $\bm{\delta}(n,m)=K>1$. Note that every clique selected by $\mathbf{CliqueSelect}(\mathrm{Q}^*_{j},m)$ contains $m_1n_1$ nodes. By the definition of $m_1$ and $n_1$, we can verify that $m_1n_1=p_1^{s'_1}\cdots p_d^{s'_d}$, where $$
s'_1 = \max\{r_1,s_1-r_1\},\dots,s'_d = \max\{r_d, s_d-r_d\}.
$$ This implies   $\bm{\delta}(n_1m_1,m)=K-1
 $, and by our induction hypothesis, each $\mathbf{CliqueSelect}(\mathrm{Q}^*_{j},m)$ produces  a sequence of $\frac{(K-1)n_1m_1}{m}$
cliques. Thus, the total length of the sequence $\mathbf{CliqueSelect}(\mathrm{V}, m)$ is
\[
n_1 + \frac{(K-1)n_1m_1}{m} n_2 = \frac{Kn}{m}.
\]
This establishes the number of cliques generated by the algorithm $\mathbf{CliqueSelect}(\mathrm{V}, m)$.

\noindent{Step 3}. We finally prove finite-time  convergence of the resulting clique-gossip algorithm along the clique sequence $\mathbf{CliqueSelect}(\mathrm{V}, m)$. Fix the initial value at all nodes. Then after the first $n_1$ steps, all nodes in $\mathrm{Q}_i=\big\{v_{m(i-1)+1},\dots, v_{mi}\big\}$ hold a common value $z_i$   for $i=1,\dots,n_1$.

Note that each $\mathrm{Q}_i$ is decomposed as $n_2$ disjoint subsets $\mathrm{Q}^*_{ij}, j=1,\dots, n_2$, where each $\mathrm{Q}^*_{ij}$ contains $m_1$ nodes. Therefore, at time $n_1$ and for $i=1,\dots,n_1$, there are $m_1$ nodes which hold value $z_i$ in $\mathrm{Q}^*_j  =\bigcup\limits_{i=1}^{n_1} \mathrm{Q}^*_{ij}$. Because the $\mathrm{Q}^*_j$ are mutually disjoint, the clique-gossip algorithm given by $\mathbf{CliqueSelect}(\mathrm{Q}^*_{j},m)$ does not influence the values of nodes outside  $ \mathrm{Q}^*_j$. Again by our induction hypothesis,  for any $j=1,\dots,n_2$, the clique-gossip algorithm given by $\mathbf{CliqueSelect}(\mathrm{Q}^*_{j},m)$  ensures that all nodes in $\mathrm{Q}^*_j$ hold  the same value $$
\frac{1}{n_1}\sum\limits_{i=1}^{n_1}z_i=\frac{1}{n}\sum_{j=1}^n \mathbf{x}_i(0).
 $$
 Therefore, all nodes in $\mathrm{V}$ will hold the same value as the average of the network initial values along the  clique-gossip algorithm generated  by $\mathbf{CliqueSelect}(\mathrm{V}, m)$ after $ {\bm{\delta}(n,m)n}/{m}$ time steps.
$\hfill\square$

\subsection{Proof of Necessity for Theorem \ref{thm:finite-converge}}
Now we prove that if there exists a graph $\mathrm{G}$ with its node set being $\mathrm{V}$ such that one can find an $m$-regular clique coverage $\mathsf{H}_{_{\mathrm{G}}}^\ast$ which can lead to a globally  finite-time convergent  clique-gossip averaging algorithm, then $n$ is divisible by $m$ with the same prime factors as $m$. We only need to find a particular initial value $\mathbf{c}\in \mathbb{R}^n$ that any deterministic clique-gossip algorithm will fail to converge in finite steps. We know that $m$ can be written uniquely as $ p_1^{r_1}\cdots p_k^{r_d}$, where $p_1<\cdots<p_d$ are prime numbers and $r_i>0,1\leq i\leq d$.

Given an arbitrary clique-gossip algorithm. We investigate two cases, respectively.
\begin{itemize}
  \item Let $n$ have a prime factor $p$ that $m$ does not have. Choose the initial value $\mathbf{c}=(1,0,\dots, 0)^T$. For any $t$, it is easy to see that $$
      \mathbf{x}_i(t)=\frac{\alpha_i(t)}{\beta_i(t)},
       $$where $\alpha_i(t)$ and $\beta_i(t)$ are coprime integers with $\beta_i(t)$ having no prime factor that $m$ does not have. That is to say, $\beta_i(t)$ does not contain the prime factor $p$, for any $t$. The limit of $\mathbf{x}_i(t)$ however must be $\sum_{i=1}^n\mathbf{x}_i(0)/{n}=1/n$. Because $n$ has $p$ as its prime factor,  such a value cannot be reached at any finite time steps.
  \item Let $n$  be represented by $p_1^{s_1}\cdots p_d^{s_d}$, where $s_i\geq 0, 1\leq i\leq d$ with some $1\leq a\leq d$ that $s_a<r_a$. Again we choose the initial value $\mathbf{c}=(1,0,\dots, 0)^T$. Similarly, for any $t$ we have $$
      \mathbf{x}_i(t)=\frac{\alpha_i(t)}{\beta_i(t)},
       $$
 where $\alpha_i(t)$ and $\beta_i(t)$ are coprime integers with $\beta_{i}(t)$ being some multiple of $p_a^{r_a}$.  Because $p_a^{r_a}$ cannot divide $n$ and the node state limit must be $1/n$, finite-time convergence is impossible.
\end{itemize}

\subsection{Proof of Almost Everywhere Impossibility}
Note that Theorem \ref{thm:finite-converge}.(i) asserts a stronger non-existence of claim in that any $m$-regular clique gossip algorithm fails to reach agreement in finite steps for almost all initial values.
Let $\mathscr{M}$ be a set consisting  of at most countable $n\times n$ real matrices. Define
\[
\mathbb{S}_{\mathscr{M}} = \big\{ \mathbf{c}\in \mathbb{R}^n: \exists t\geq 1, \mathbf{M}_1,\dots,\mathbf{M}_t\in \mathscr{M}, s.t. \ \mathbf{M}_t\cdots\mathbf{M}_1\mathbf{c} \in \mathrm{span}\{\bf 1\}\big\}.
\]
It is easy to verify that
\begin{equation*}
\mathbb{S}_{\mathscr{M}} = \bigcup\limits_{t=0}^{\infty}\bigcup\limits_{\mathbf{M}_0,\dots,\mathbf{M}_t\in \mathscr{M}}\mathbb{S}_{\mathbf{M}_0\dots\mathbf{M}_t},
\end{equation*}
where \[
\mathbb{S}_{\mathbf{M}_1\dots\mathbf{M}_t}=\big\{ \mathbf{c}\in \mathbb{R}^n:  \mathbf{M}_t\cdots\mathbf{M}_1\mathbf{c} \in \mathrm{span}\{\bf 1\}\big\}
\]
with $\mathbf{M}_1,\dots,\mathbf{M}_t\in \mathscr{M}$.

Note that each $\mathbb{S}_{\mathbf{M}_1\dots\mathbf{M}_t}$ is a linear subspace of $\mathbb{R}^n$, with a dimension no larger than $n$. If all $\mathbb{S}_{\mathbf{M}_1\dots\mathbf{M}_t}$ are lower-dimensional  subspaces of $\mathbb{R}^n$,  $\mathbb{S}_{\mathbf{M}_1\dots\mathbf{M}_t}$ has zero measure for any $\mathbf{M}_0,\dots,\mathbf{M}_t\in \mathscr{M}$. This in turn tells us that $\mathbb{S}_{\mathscr{M}}$ is a zero-measure set for $\mathscr{M}$ is a union of   countably many  zero-measure sets. On the other hand,  if  there exists $\mathbf{M}_0,\dots,\mathbf{M}_t\in \mathscr{M}$ such that $\mathbb{S}_{\mathbf{M}_1\dots\mathbf{M}_t}$ is  $n$ dimension, we have $\mathbb{S}_{\mathscr{M}}=\mathbb{R}^n$.  Therefore,  either $\mathbb{S}_{\mathscr{M}}=\mathbb{R}^n$ or $\mathbb{S}_{\mathscr{M}}$ is a zero-measure set in $\mathbb{R}^n$. The desired almost everywhere impossibility conclusion holds immediately since we already proved non-existence of globally finite-time  convergent $m$-regular clique gossiping.

\subsection{Proof of Complexity}

Recall that $m=p_1^{r_1}\cdots p_d^{r_d}$ and   $n=p_1^{s_1}\cdots p_d^{s_d}$ with $s_i\geq r_i>0$. We have provided an algorithm that converges in ${\bm{\delta}(n,m)n}/{m}$ steps. Now we prove that it is indeed  the fastest algorithm. Consider any $m$-regular clique gossip algorithm that converges globally in finite time. Then  there must exist $T\geq 0$ such that
\[
\mathbf{M}_{\sigma(T)}\cdots\mathbf{M}_{\sigma(1)} = \frac{1}{n} {\bf 1}{\bf 1^T}.
\]
Introduce $\mathbf{N}(t) = \mathbf{M}_{\sigma(t)}\cdots\mathbf{M}_{\sigma(1)}$ and $
h_i(t) = \Big|\big\{s: [\mathbf{M}_{\sigma(s)}]_{ii} = \frac{1}{m}, 1\leq s\leq t\big\}\Big|$.
Note that  $h_i(t)$ represents  the number of times at which $i$  is in the selected cliques for the first $t$ steps.

Denote
$\tau=\arg\max_{1\leq i\leq d} \lceil \frac{s_i}{r_i}\rceil$. Associated with the prime number  $p_{\tau}$, we define a function  $\mathpzc{P}_\tau(\cdot)$ over all natural numbers by
\[
\mathpzc{P}_\tau(x) = \max\{k: x=yp_{\tau}^k,y\in \mathbb{N}, k\in \mathbb{Z}\}.
\]
In other words, $\mathpzc{P}_\tau(x)$ is the number of powers of the prime number $p_{\tau}$ in the arithmetic decomposition of $x$.

We can verify recursively that $$
[\mathbf{N}(t)]_{ii}=\frac{\gamma_{ii}(t)}{\delta_{ii}(t)},
 $$where $\gamma_{ii}(t)$ and $\delta_{ii}(t)$ are coprime numbers with $\mathpzc{P}_\tau(\delta_{ii}(t))\leq h_i(t)r_{\tau}$. Based the facts that $[\mathbf{N}(T)] = {1}/{n}$ and $\mathpzc{P}_\tau(n)> \big(\bm{\delta}(n,m)-1 \big)r_{\tau}$, we obtain
\[
\big(\bm{\delta}(n,m)-1\big)r_{\tau}< \mathpzc{P}_\tau(n)= \mathpzc{P}_\tau(\delta_{ii}(T)) \leq h_i(T)r_{\tau}.
\]
This implies  $h_i(T)\geq \bm{\delta}(n,m)$. On the other hand, there must hold
\[
\bm{\delta}(n,m)n\leq \sum\limits_{i=1}^nh_i(T) = Tm.
\]
We can now conclude  $T\geq {\bm{\delta}(n,m)n}/{m}$, and this is the fundamental  lower bound  that any $m$-regular clique gossip algorithm can reach  in terms of convergence time. We have now proved the complexity claim in Theorem \ref{thm:finite-converge}.(ii).

\section{Conclusions}\label{sec:conclusions}
We have presented  a framework for clique gossip protocols where node interactions utilize cliques as  complete  subnetworks in gossip processes.     Clique-gossip protocols  and  clique-gossip averaging algorithms  have been defined as generalizations of standard gossip protocols and gossip averaging algorithms, respectively. A fundamental eigenvalue invariance principle  for   periodic clique-gossip protocols was established, and the possibilities of realizing finite-time convergent clique-gossip averaging were thoroughly investigated. Numerical examples also revealed the acceleration effects of clique-gossiping compared to standard gossiping.  Interesting future directions include concrete theoretical validations of how much improvement can be gained via clique-gossiping in terms of efficiency, and self-organized or engineering mechanism that produces local cliques across a network.

\end{document}